\documentclass[preprint,12pt]{elsarticle}

\usepackage{graphicx}
\usepackage{dhucs}
\usepackage{mathrsfs,amsmath,amssymb,amscd,mathtools,amsthm}
\usepackage{array}
\usepackage{hyperref}
\usepackage{subfig}
\usepackage{adjustbox}
\usepackage{multirow}
\usepackage{booktabs}
\usepackage{color}
\usepackage{setspace}
\usepackage{rotating}
\usepackage{soul}
\usepackage{float}
\usepackage{caption}
\usepackage{bbm}
\usepackage{upgreek}
\usepackage[normalem]{ulem}
\usepackage{nicematrix}
\usepackage{nccmath}
\NiceMatrixOptions{code-for-first-col=\scriptstyle,code-for-first-row=\scriptstyle}

% parameters
\newtheorem{proposition}{Proposition}

\newtheorem{remark}{Remark}

%for cross refer format changing
\captionsetup[subfigure]{subrefformat=simple,labelformat=simple,listofformat=subsimple}

% commands for comments

% Figure path
\graphicspath{{./Figures/}}

%% The lineno packages adds line numbers. Start line numbering with
%% \begin{linenumbers}, end it with \end{linenumbers}. Or switch it on
%% for the whole article with \linenumbers.
%% \usepackage{lineno}

\journal{Reliability Engineering \& System Safety}

\begin{document}

\begin{frontmatter}

%% Title, authors and addresses

%% use the tnoteref command within \title for footnotes;
%% use the tnotetext command for theassociated footnote;
%% use the fnref command within \author or \affiliation for footnotes;
%% use the fntext command for theassociated footnote;
%% use the corref command within \author for corresponding author footnotes;
%% use the cortext command for theassociated footnote;
%% use the ead command for the email address,
%% and the form \ead[url] for the home page:
%% \title{Title\tnoteref{label1}}
%% \tnotetext[label1]{}
%% \author{Name\corref{cor1}\fnref{label2}}
%% \ead{email address}
%% \ead[url]{home page}
%% \fntext[label2]{}
%% \cortext[cor1]{}
%% \affiliation{organization={},
%%             addressline={},
%%             city={},
%%             postcode={},
%%             state={},
%%             country={}}
%% \fntext[label3]{}

\title{Lifetime Analysis of Circular \textit{k}-out-of-\textit{n}: G Balanced Systems in a Shock Environment}

\author[label1]{Seung Min Baik}
\ead{seung-min.baik@hec.ca}
\affiliation[label1]{organization={Department of Decision Sciences, HEC Montréal},
            addressline={3000 Chemin de la Côte-Sainte-Catherine},
            city={Montréal},
            postcode={H3T 2A7},
            state={Quebec},
            country={Canada}}

\author[label2]{Yongkyu Cho\corref{cor1}}
\ead{yongkyu.cho@kgu.ac.kr}
\affiliation[label2]{organization={Department of Industrial and Management Engineering, Kyonggi University},
            addressline={154-42 Gwanggyosan-ro},
            city={Suwon},
            postcode={16227},
            state={Gyeonggi-do},
            country={South Korea}}
            
\cortext[cor1]{Correspondence to Yongkyu Cho}
            
%% Abstract
\begin{abstract}
    This paper examines the lifetime distributions of circular $k$-out-of-$n$: G balanced systems operating in a shock environment, providing a unified framework for both discrete- and continuous-time perspectives. The system remains functioning only if at least $k$ operating units satisfy a predefined balance condition (BC). Building on this concept, we demonstrate that the shock numbers to failure (SNTF) follow a discrete phase-type distribution by modeling the system's stochastic dynamics with a finite Markov chain and applying BC-based state space consolidation. Additionally, we develop a computationally efficient method for directly computing multi-step transition probabilities of the underlying Markov chain. Next, assuming the inter-arrival times between shocks follow a phase-type distribution, we establish that the continuous-time system lifetime, or the time to system failure (TTF), also follows a phase-type distribution with different parameters. Extensive numerical studies illustrate the impact of key parameters—such as the number of units, minimum requirement of the number of operating units, individual unit reliability, choice of balance condition, and inter-shock time distribution—on the SNTF, TTF, and their variability. 
\end{abstract}

%% Keywords
\begin{keyword}
    Circular $k$-out-of-$n$: G balanced system \sep Shock environment \sep Lifetime distribution \sep Finite Markov chain imbedding approach \sep Phase-type distribution
\end{keyword}

\end{frontmatter}

%% Add \usepackage{lineno} before \begin{document} and uncomment 
%% following line to enable line numbers
%% \linenumbers

%% main text
%%
\section*{Notation}
    %\begin{itemize}%[\labelsep=3em]	
    \begin{description}%[\labelsep=3em]
        \item[$n$] number of units in a system
        \item[$k$] minimum number of operating units for a functioning system
        \item[$r$] reliability of a unit; one-step survival probability of an operational unit
        \item[$\mathbf{X}$] system state (random) tuple defined as $\mathbf{X}=(X_1,...,X_n)$; $X_i$ is a binary state variable of unit $i$ where $X_i=1$ if unit $i$ is operational, $X_i=0$ otherwise.
        \item[$T$] minimum tie-set of a system; $T=\{i_1,i_2,\dots,i_{|T|}\}\subset\{1,2,...,n\}$ where each $i\in T$ corresponds to the index of a unit that comprises the system
        \item[$\mathcal{T}$] collection of the minimum tie-sets of a system; $\mathcal{T}=\{ T_1,T_2,\dots,T_{|\mathcal{T}|} \} $
        \item[$M$] shock numbers to system failure (SNTF), i.e., discrete-time system lifetime 
        \item[$Y$] duration of time between two consecutive shocks, i.e., inter-shock time
        \item[$Z$] duration of time to system failure (TTF), i.e., (continuous-time) system lifetime
        \item[$\mathcal{X}$] state space of the system; $\mathcal{X}=\{\textrm{x}_1,\textrm{x}_2,\dots,\textrm{x}_N\}$ where each $\textrm{x}$ denotes a possible realization of system state vector $\mathbf{X}$ and $N\equiv|\mathcal{X}|=2^n$, e.g., if $n=3$, $\textrm{x}_1=(1,1,1),\textrm{x}_2=(1,1,0),\textrm{x}_3=(1,0,1),\dots,\textrm{x}_8=(0,0,0)$, and hence $N=2^3=8$.
        \item[$\textbf{P}$] one-step transition probability matrix after the system experiences a shock; $\textbf{P} = \left[P_{\textrm{x}_a,\textrm{x}_b}\right]$ where $P_{\textrm{x}_a,\textrm{x}_b}\equiv\mathbb{P}\left\{\textbf{X} \text{ after shock}=\textrm{x}_b|\textbf{X} \text{ before shock} =\textrm{x}_a\right\}$ for $(\textrm{x}_a, \textrm{x}_b) \in \mathcal{X}^2$
        \item[$\mathcal{X}_\mathrm{BC}$] collection of the nonfailed system states that satisfy the subscripted balance condition (BC), e.g., $\mathcal{X}_\textrm{BC1},\mathcal{X}_\textrm{BC2},$ and $\mathcal{X}_\textrm{BC3}$.
        \item[$\mathbf{P}_\mathrm{BC}$]  one-step transition probability matrix between the nonfailed states in $\mathcal{X}_{\textrm{BC}}$. 
        \item[$N_{\mathrm{BC}}$] number of nonfailed system states considering the subscripted BC; $N_{\mathrm{BC}}=|\mathcal{X}_\mathrm{BC}|$. 
       \item[$E_{\mathrm{BC}}$] set of all the failed system states considering the subscripted BC; $E_\mathrm{BC}=\mathcal{X}\backslash \mathcal{X}_\mathrm{BC}$.
        \item[$\bar{\mathcal{X}}_{\mathrm{BC}}$] reduced system state space defined as $\bar{\mathcal{X}}_\mathrm{BC} = \{\bar{\textrm{x}}_1, \bar{\textrm{x}}_2, \dots, \bar{\textrm{x}}_{N_\mathrm{BC}}\}\cup\{\bar{\mathrm{x}}_{N_\textrm{BC}+1}\}$, where the subset $\{\bar{\textrm{x}}_1, \bar{\textrm{x}}_2, \dots, \bar{\textrm{x}}_{N_\mathrm{BC}}\}$ corresponds one-to-one with the set $\mathcal{X}_\mathrm{BC}$ and the state $\bar{\mathrm{x}}_{N_\mathrm{BC}+1}$ corresponds to a unique absorbing state that consolidates all the failed states included in $E_\mathrm{BC}$.
        \item[$\bar{\mathbf{P}}_\mathrm{BC}$] one-step transition probability matrix between all the states in $\bar{\mathcal{X}}_{\mathrm{BC}}$.
        \item[$\mathbf{e}$] column vector of ones with the appropriate size; $\mathbf{e}=[1, \dots,1]^\top$ 
        \item[$\mathbf{0}$] row vector of zeros with the appropriate size; $\mathbf{0}=[0,\dots,0]$
    \end{description}

\section{Introduction} \label{sec:intro}
    We consider a system in which multiple homogeneous units are equispaced in a circular layout, and the system can perform its intended function only when the operating units maintain a certain balance requirement. For convenience, we abbreviate this type of system as a circular $k$-out-of-$n$: G balanced (CknGB) system for the last of this research study. A representative example of a CknGB system is an unmanned aerial vehicle (UAV) or urban air mobility (UAM) aircraft equipped with multiple rotors, as shown on the left side of Fig.~\ref{fig:CknGB_concept}. In such systems, the concept of $k$-out-of-$n$ systems, which is widely used in the field of reliability engineering, can be applied from the perspective that at least $k$ out of $n$ rotors must be operating to generate the minimum lift force required to keep the aircraft staying in the air.
    
    In addition to the circular arrangement of units, another crucial factor that distinguishes the CknGB system from a typical $k$-out-of-$n$ system is the consideration of the balance concept. In systems like UAVs and UAMs, even if more than $k$ units are operational, the system cannot perform its core function of maintaining flight unless those units are physically balanced. In the same vein, another example of a CknGB system is the balanced engine system of a manned space shuttle~\cite{hirata2000}, where engine pairs are symmetrically positioned around the center. Specifically, during the landing phase of such a space shuttle, if one of the two engines in a pair fails, the remaining engine of the pair must be shut down to maintain the physical balance, thereby increasing the probability of a safe landing without loss of life~\cite{hirata2000}. Since the definition of system balance can be physical, logical, or conceptual, depending on the analytical context, no single definitive methodology exists for evaluating the reliability of CknGB systems~\cite{hua2016a, cho2023}.
    
    The first reliability study of $k$-out-of-$n$ systems incorporating both spatially distributed units and balance conditions was conducted by Sarper and Sauer~\cite{Sarper2002}. They introduced a novel reliability model, later referred to as the $k$-out-of-$n$ \textit{pairs}: G balanced system with spatially distributed units by Hua and Elsayed~\cite{hua2016a,hua2016b,hua2018}. This model was inspired by the balanced engine system of manned space shuttles and demonstrated the feasibility of quantifying the system reliability from both discrete- and continuous-time perspectives~\cite{hirata2000}. After a decade or so, Hua and Elsayed~\cite{hua2016a,hua2016b,hua2018} extended the model to UAV systems equipped with circularly arranged rotors, contributing to advancements in reliability estimation~\cite{hua2016a}, degradation analysis~\cite{hua2016b}, and reliability approximation methodologies~\cite{hua2018}.
    
    Motivated by the $k$-out-of-$n$ pairs: G balanced system with spatially distributed units, Endharta~\textit{et al.}~\cite{endharta2018} were the first to formally define the CknGB system. While Hua and Elsayed~\cite{hua2016a,hua2016b,hua2018} consistently considered a \textit{symmetry}-based balance condition for their reliability investigation, Endharta~\textit{et al.}~\cite{endharta2018} introduced a \textit{proportionality}-based balance condition and developed a method for calculating system reliability by enumerating minimum tie-sets (or minimal path sets). Subsequently, Endharta and Ko~\cite{endharta2020} extended prior research study by considering a situation where operating units evenly share the physical load, and they proposed economic design methodologies and optimal maintenance policies for such systems. More recently, Cho~\textit{et al.}~\cite{cho2023} introduced a simple yet more generalized balance condition for CknGB systems based on the \textit{center of gravity}, demonstrating that this condition can enhance system reliability compared to previously considered balance conditions. In summary, a series of reliability studies from Hua and Elsayed~\cite{hua2016a} to Cho~\textit{et al.}~\cite{cho2023} share a common feature: the adoption of a geometry-based balance concept rooted in the spatial information of units.

    Building on the aforementioned research stream on the reliability analysis of balanced systems, this study examines the lifetime distribution of the CknGB system. For a similar purpose, but adopting a different concept of system balance, a body of studies has utilized the finite Markov chain imbedding approach (FMCIA)~\cite{fu1994,cui2010,wu2013} and its variants as a core analytical methodology. As pioneering studies, Cui~\textit{et al.}~\cite{cui2018} and Cui~\textit{et al.}~\cite{cui2019} applied the Markov process imbedding technique to analyze a so-called $k$-out-of-$n$: F balanced system with $m$ sectors and a balanced system where the balance is evaluated based on the conceptual distance between states of specified units, respectively. Subsequently, Hongda~\textit{et al.}~\cite{gao2019} introduced a repairable system concept to the $k$-out-of-$n$: F balanced system with $m$ sectors and derived its reliability and availability. Based on the analytical results, the authors proposed three different maintenance policies and compared the performances. Similarly, Fang and Cui~\cite{fang2020} introduced start-up uncertainty into the $k$-out-of-$n$: F balanced system with $m$ sectors, inspired by port system in which several subsystems must work in cooperation. 

    Meanwhile, the research on multi-state balanced systems has emerged. Early work by Wu~\textit{et al.}~\cite{wu2020} explored multiple failure criteria in multi-state balanced systems, introducing threshold-based failure models that account for system degradation over time. Zhao~\textit{et al.}~\cite{zhao2020} extended the multi-state balanced system model by incorporating a shock environment, employing a two-step FMCIA to derive the distributional indices of the system lifetime. Further advancements in the field have introduced novel balancing mechanisms. Fang and Cui~\cite{fang2021} investigated multi-state competing risks under degradation processes, highlighting the impact of prolonged unbalanced states on system failure. Another innovative approach by Zhao~\textit{et al.}~\cite{zhao2024a} explored $k$-out-of-$n$: F balanced systems with common bus performance sharing, demonstrating that performance redistribution can improve system reliability without forcing down or restarting units. 

    On the other hand, several studies have considered the balanced systems equipped with protective devices or standby units that can enhance system survivability. For instance, Wang~\textit{et al.}~\cite{wang2023} analyzed reliability models for balanced systems equipped with multi-state protective devices. Similarly, Zhao~\textit{et al.}~\cite{zhao2024b} proposed a balanced system that dynamically switches standby units to maintain stability, considering both unit state- and symmetric position-based balance conditions. More recently, Zhao~\textit{et al.}~\cite{zhao2025} introduced a standby unit replacement strategy utilizing a dedicated standby pool, ensuring continued system balance by replacing failed or degraded units.
    
    The analytical framework underlying this research study shares similarities with those in Cui~\textit{et al.}~\cite{cui2019} and Zhao~\textit{et al.}~\cite{zhao2020}. However, what distinguishes this study from previous literature is its adherence to the geometric concept of balance conditions, as introduced in the seminal works of Hua and Elsayed~\cite{hua2016a,hua2016b,hua2018}. This approach emphasizes the physical characteristics of the motivational real-world balanced systems such as UAVs or UAMs, while simultaneously posing challenges in evaluating system reliability. In this regard, we outline the key contributions of this study as follows:

    \begin{itemize}
        \item We derive the probability distribution of system lifetime for circular $k$-out-of-$n$: G balanced (CknGB) systems in a shock environment, from both discrete- and continuous-time perspectives. Our approach consolidates the concepts of geometric balance conditions, minimum tie-sets, finite Markov chain imbedding approach, and phase-type distributions.
        \item Leveraging the systemic properties of the CknGB system, we develop a computationally efficient technique to mitigate the curse of dimensionality, making the proposed approach more scalable for larger systems.
        \item We present a well-described case study that illustrates an end-to-end application of the proposed approach, along with extensive numerical results on various system configurations. These demonstrate the adequacy of the proposed methods while analyzing the impact of system parameters on the system's lifetime.
    \end{itemize}
    
    The remainder of this paper is organized as follows. Section~\ref{sec:sys_desc} describes the target system and outlines the crucial concept of balanced systems discussed in this study. Section~\ref{sec:modeling_analysis_lifetime} provides a detailed explanation of the modeling and analysis framework for system lifetime, including the computationally efficient techniques integrated into the proposed approach. Section~\ref{sec:descriptive_case} presents a descriptive case study for a specific system configuration, while Section~\ref{sec:num_result} offers extensive numerical analyses for the system with varying parameters. Finally, Section~\ref{sec:conclusion} concludes this paper and suggests possible future research directions.

\section{Target System Description} \label{sec:sys_desc}
    Consider a circular $k$-out-of-$n$: G balanced (CknGB) system in which homogeneous units are arranged in a circular layout, as shown in Fig.~\ref{fig:CknGB_concept}, and the system can function properly only if the operating units remain balanced. This system begins in a fully functioning state, with no failed unit, but over time, it may become unable to perform its function due to failures of units triggered by external random shocks. 
    \begin{figure}[H]
        \centering
        \resizebox{0.9\textwidth}{!}{
        \includegraphics{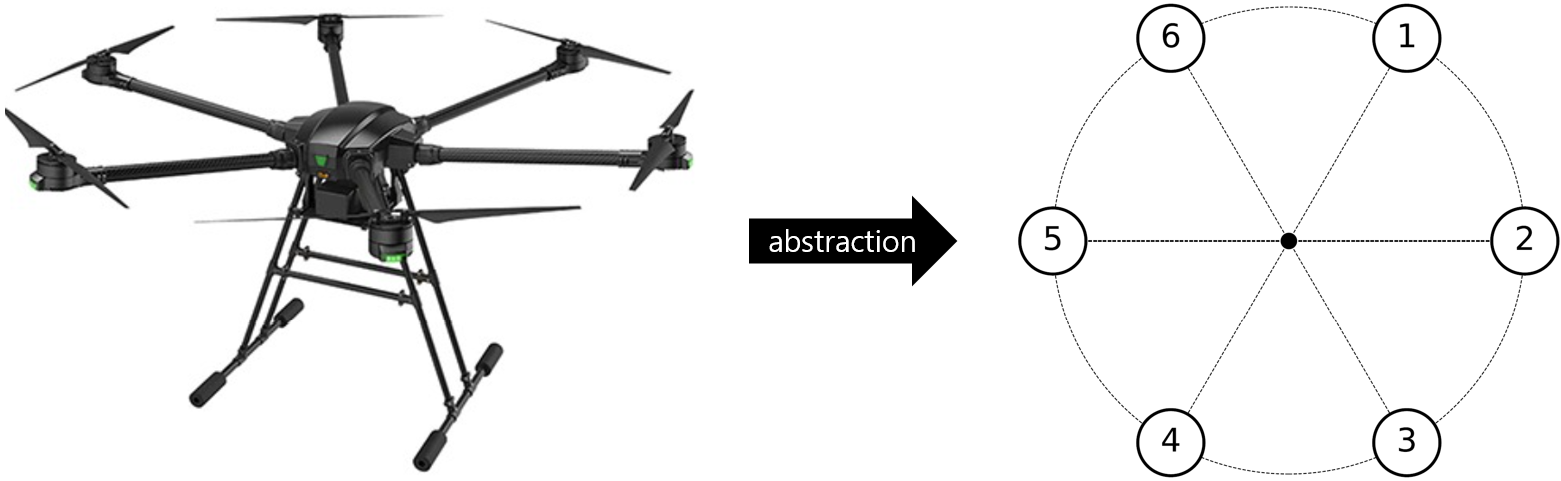}
        }
        \caption{A real-world motivation of the target system and its abstracted graphical model \cite{EFT}}
        \label{fig:CknGB_concept}
    \end{figure}
    
    More specifically, for the system to function, at least $k$ out of the total $n$ units must be operating, and the operating units must also be balanced. Whenever an external shock occurs, each nonfailed unit fails independently of other units with probability $1-r$ and is still operational with probability $r$; $r$ denotes the one-step reliability (i.e., survival probability) of each unit. In this context, from a discrete perspective, the system’s lifetime can be defined as the total number of external shocks the system experiences before it can no longer function properly. From a continuous perspective, it can be defined as the total duration of time until the system experiences that number of shocks. The analytical derivation of the system lifetime distributions for the CknGB system will be explained in detail in Section~\ref{sec:modeling_analysis_lifetime}.

    \subsection{Balance conditions} \label{subsec:bc}
    The balance conditions (BCs) of the CknGB system can be defined in various ways depending on the characteristics of the target system. Here, we provide a brief introduction to the three balance conditions—BC1, BC2, and BC3—summarized by Cho \textit{et al}.~\cite{cho2023}, as illustrated in Fig.~\ref{fig:BCs}. In each illustration in the figure, white circles represent still-operating units, while black circles indicate failed units.
    \begin{figure}[H]
    	\centering
    	\subfloat[][BC1: System is symmetric~\cite{hua2016a}]
    	{
    		\centering\resizebox{0.31\textwidth}{!}{\includegraphics{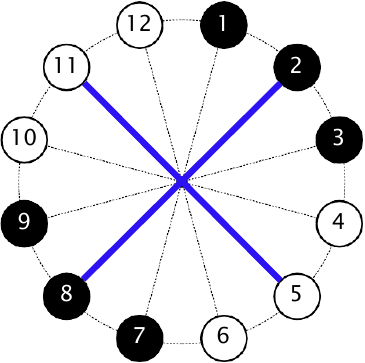}}
    		\label{fig:BC1}		
    	}
    	~
    	\subfloat[][BC2: System is spread proportionately~\cite{endharta2018}]
    	{
    		\centering\resizebox{0.31\textwidth}{!}{\includegraphics{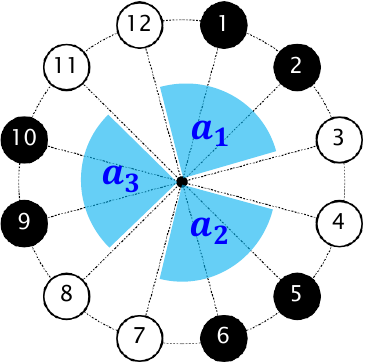}}
    		\label{fig:BC2}		
    	}
    	~
    	\subfloat[][BC3: System has the center of gravity at the origin~\cite{cho2023}]
    	{
    		\centering\resizebox{0.31\textwidth}{!}{\includegraphics{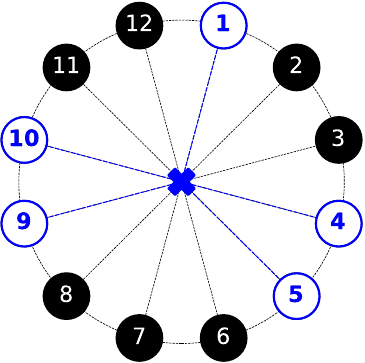}}
    		\label{fig:BC3}		
    	}        
    	\caption{Three different types of balance conditions \cite{cho2023}}
    	\label{fig:BCs}
    \end{figure}
    
    The first balance condition, BC1, is based on \textit{symmetry}, meaning that the system is considered balanced when the operating units form a specific symmetry (e.g., see Fig.~\ref{fig:BC1}). The axis along which the system can be folded in half to align perfectly is called an \textit{axis of symmetry}, and two mutually perpendicular axes of symmetry are referred to as a pair of perpendicular axes of symmetry (shown as the thick blue lines in the figure). The satisfaction of BC1 can be quantitatively evaluated by determining whether at least one pair of perpendicular axes of symmetry exists~\cite{hua2016a}.

    The second balance condition, BC2, is based on \textit{proportionality}, meaning that the system is considered balanced when the operating units are evenly distributed throughout the system (e.g., see Fig.~\ref{fig:BC2}). Whether or not the system satisfies BC2 can be quantitatively verified by examining if the angular gaps (labeled $a_1,a_2$ and $a_3$ in the figure) between the groups of operating units follow a specific pattern~\cite{endharta2018}.

    The third balance condition, BC3, is based on the concept of \textit{center of gravity}, meaning that the system is considered balanced if the center of gravity formed by the operating units is located at the exact center of the system (e.g., see Fig.~\ref{fig:BC3}). Whether or not BC3 is satisfied can be quantitatively examined by calculating the location of the center of gravity (shown by the blue `x' marker in the figure).
    
    \subsection{Rebalancing operation} \label{subsec:ro}
    In the situation where the operation of nonfailed units can be arbitrarily controlled (i.e., an operational unit can be dynamically turned on or off), it is possible to restore a system that has fallen into an unbalanced state due to the failure of some units back to a balanced state. This control action will be referred to as \textit{rebalancing}. 
    
    For example, consider a CknGB system with \( k = 2 \) and \( n = 6 \), where the balance condition BC3 is applied. Fig.~\ref{fig:before_rebalancing} illustrates such a system where units 2 and 4 have failed, leaving the system in an unbalanced state (indicated by the red `x' marker representing the center of gravity), rendering it unable to function. In this situation, if we control unit 6 to stop functioning (depicted in gray to indicate a unit that has not failed but is intentionally turned off), as shown in Fig.~\ref{fig:after_rebalancing}, the center of gravity of the still-operating units (1, 3, and 5) aligns with the origin. Consequently, the system is restored to a balanced state (indicated by the blue `x' marker representing the center of gravity).
    
    Generalizing this principle, whether an unbalanced system can be rebalanced depends quantitatively on whether the set of functioning units includes at least one \textit{minimum tie-set} (also known as \textit{minimal path set}). Therefore, the most critical step in assessing the reliability of a CknGB system is deriving the collection of minimum tie-sets for the system.
    \begin{figure}[H]
    	\centering
    	\subfloat[][Before rebalancing]
    	{
    		\centering\resizebox{0.3\textwidth}{!}{\includegraphics{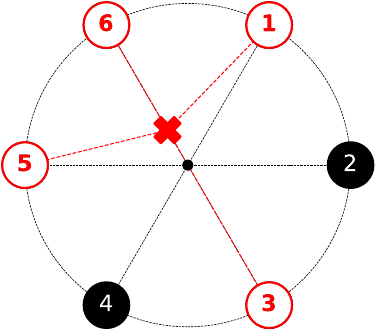}}
    		\label{fig:before_rebalancing}		
    	}
    	~
    	\subfloat[][After rebalancing (unit 6 is forced off)]
    	{
    		\centering\resizebox{0.3\textwidth}{!}{\includegraphics{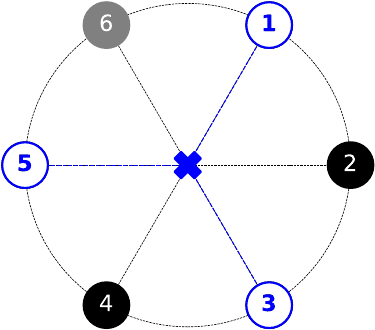}}
    		\label{fig:after_rebalancing}		
    	}
    	\caption{Concept of the rebalancing by turning on/off the operational units}
    	\label{fig:rebalancing}
    \end{figure}

    \subsection{System reliability evaluation using minimum tie-sets} \label{subsec:sys_rel}
    As mentioned in Section~\ref{subsec:ro}, the system reliability of the CknGB system, say $R_S$, can be evaluated using the concept of minimum tie-set. To explain, let $\mathbf{X}=(X_1,X_2,...,X_n)$ be a system state (random) tuple for a system comprised of $n$ independent and identical units where $X_i$ is a binary state variable of unit $i$; $X_i=1$ if unit $i$ is nonfailed and $X_i=0$ otherwise. Then, the system structure function $\phi(\cdot)$ of $\mathbf{X}$ can be defined as follows:
	\begin{align}
		\phi(\mathbf{X}) \equiv 1-\prod_{T \in \mathcal{T}}\left( 1-\prod_{i\in T} X_{i} \right), \label{eq:system_structure_function}
	\end{align}
    where the set $T=\{i_1,i_2,...,i_{|T|}\}$ is a minimum tie-set of the system and the set $\mathcal{T}$ is the collection of all $T$'s. By the definition in Eq.~\eqref{eq:system_structure_function}, $\phi(\mathbf{X})$ is an indicator random variable that equals 1 if $\mathbf{X}$ corresponds to a nonfailed system and 0 otherwise. Therefore, the system reliability $R_S\equiv\mathbb{P}\left\{\phi(\mathbf{X})=1\right\}$ can be directly obtained by calculating $\mathbb{E}[\phi(\mathbf{X})]$ under the assumption $\mathbb{E}[X_i]=r$ for all $i$. Specifically, we have 
    \begin{align}
        R_S=1-\prod_{T \in \mathcal{T}}\left( 1-r^{|T|} \right).
    \end{align}
    Once the set of minimum tie-sets $\mathcal{T}$ is obtained by enumeration, calculating the system reliability is straightforward. For a detailed method of enumerating all the minimum tie-sets, we refer readers to the previous research studies~\cite{endharta2018,cho2023}.
    
    Throughout this section, we have outlined the key factors defining the target system. Before proceeding to the next section, we summarize the main assumptions that apply to the final stage of this research study.
    \begin{itemize}
        \item We consider circular $k$-out-of-$n$: G balanced systems, comprised of $n$ independent and identical units, each having binary states: $1$ for nonfailed (hence operational) state and $0$ for failed state. Therefore, the number of possible system states is $2^n$.
        \item The system remains nonfailed only if there are $k$ or more operating units that satisfy a predefined balance condition. Additionally, each nonfailed unit is subject to on/off control to maintain system balance. Both failed units and turned-off units have no influence on the system's balance. Consequently, any system state that includes at least one minimum tie-set is considered a nonfailed system state.   
        \item The system is subject to sequential random shocks in which each shock probabilistically affects all the units in the system. That is, a shock can be either fatal or nonfatal for each unit independently; after experiencing a shock, each nonfailed unit is still operational with probability $r$ but fails with probability $\tilde{r}\equiv 1-r$. 
    \end{itemize}

\section{Modeling and Analysis of System Lifetime} \label{sec:modeling_analysis_lifetime}
    In this section, we investigate the analytical framework for deriving the lifetime distribution of a circular $k$-out-of-$n$: G balanced system in a shock environment, transitioning from a discrete-time to a continuous-time perspective. As discussed in Section~\ref{sec:sys_desc}, external shocks probabilistically affect the operation of each unit and, consequently, the entire system. Therefore, the system lifetime is a random variable characterized by key parameters, including the number of units $n$, the minimum number of operating units $k$, the one-step survival probability of a unit $r$, and the balance condition BC. 

    To model the system's stochastic dynamics, we construct a discrete-time Markov chain that tracks changes in the system state as it is affected by external shocks. From the discrete-time perspective, we derive the analytically formulated probability mass function (pmf) and the moments of the system lifetime, represented by the discrete random variable $M$, which is referred to as the \textit{shock numbers to system failure} (SNTF). Subsequently, assuming that the inter-arrival times between shocks follow a (continuous) phase-type distribution, which can approximate any positive-valued distribution, we further derive the probability density function (pdf) and the moments of the continuous-time system lifetime. This lifetime is represented by the continuous random variable $Z$, referred to as the \textit{time to system failure} (TTF).

    Overall, our approach closely resembles the well-known finite Markov chain imbedding approach (FMCIA) \cite{fu1994,cui2010,wu2013}. To proceed continuous-time analysis, we adopt its two-step extension, similar to the framework described in Zhao~\textit{et al.}~\cite{zhao2020}. However, in our case, both the system and its individual units are modeled as being either operational (i.e., nonfailed) or failed, resulting in a two-step FMCIA with only two contextual states for both the unit and system. Despite this simple configuration, computational challenges persist due to the large system state space. To address these challenges, we propose several enhancements.

    First, we introduce a consolidation technique to compactify the underlying state space of the Markov chain while preserving analytical accuracy. This consolidation benefits all subsequent analyses, including the derivation of the pmf for SNTF, the pdf for TTF, and their respective moments. Second, we develop alternative methods for calculating multi-step transition probabilities, which further reduce the computational burden specifically for obtaining the pmf of SNTF as well as its moments. Additionally, our methodology is independent of specific balance conditions and can be applied to all balance conditions—BC1, BC2, and BC3—introduced in Section~\ref{subsec:bc}. To reflect this generality, we omit subscripts (e.g., 1, 2, or 3) for balance condition throughout this section, as our analytical framework applies to any balance condition---not only those considered in this paper, but also new conditions that may become of interest in the future, provided that the geometric arrangement of units is relevant.
    
    \subsection*{Step 1: Modeling a Markov chain considering full state space}
    Starting from a fully functioning initial state, the system transitions gradually to one of the failed states as it experiences successive shocks. Each unit is affected by shocks independently, and once a unit fails, it remains in a failed state thereafter. Otherwise, only the most recent shock matters its state change, making each unit’s stochastic behavior Markovian. Consequently, the system’s overall state transitions also exhibit Markovian behavior, making this process naturally suited for modeling as a Markov chain.

    Let us denote the system state by a (random) tuple $\mathbf{X}$, which belongs to the system state space $\mathcal{X} = \{\textrm{x}_1,\textrm{x}_2,\dots,\textrm{x}_N\}$. Each $\textrm{x}$ represents a possible realization of the system state tuple $\mathbf{X}$. For example, when $n=3$, the system state space consists of $\textrm{x}_1=(1,1,1)$, $\textrm{x}_2=(1,1,0)$, ... , and $\textrm{x}_8=(0,0,0)$. These enumerate $N\equiv|\mathcal{X}|=2^3$ possible combinations, ranging from all units functioning to none of them functioning. 
    
    We consider a Markov chain $\{\mathbf{X}_\nu\}_{\nu=0}^{\infty}$ where the $\mathbf{X}_\nu$ denotes the system state after experiencing $\nu$ external shocks. The initial state is given by $\mathbf{X}_0 = \textrm{x}_1 = (1, 1, \dots, 1)$. The one-step transition behavior among the system state space $\mathcal{X}$ can be characterized by the transition probability matrix $\textbf{P} = \left[P_{\textrm{x}_a,\textrm{x}_b}\right]$ where $P_{\textrm{x}_a,\textrm{x}_b}\equiv\mathbb{P}\left\{\textbf{X} \text{ after shock}=\textrm{x}_b|\textbf{X} \text{ before shock} =\textrm{x}_a\right\}$ for $(\textrm{x}_a, \textrm{x}_b) \in \mathcal{X}^2$. 
    
    Note that the size of the state space grows exponentially with the number of units, such that $|\mathcal{X}| = 2^n$. The transition matrix defined over all possible state combinations $\mathcal{X}^2$ consists of $2^{2n}$ elements. For even moderate values of $n$, the size of the matrix might become numerically challenging to handle (e.g., $2^{2n} \approx 10^{7.22}$ for $n=12$). Moreover, the matrix lacks convenient properties, such as symmetry, that could simplify its representation or computation. Therefore, to address these challenges, we will propose state space consolidation approach to mitigate the numerical and computational issues. 
    
    \subsection*{Step 2: Conversion to a smaller Markov chain considering a consolidated state space}
    Let $\mathcal{X}_\textrm{BC}$ denote the collection of nonfailed system states within the full system state space $\mathcal{X}$, where an arbitrary balance condition, denoted by BC, is applied. Accordingly, let $N_\textrm{BC}\equiv|\mathcal{X}_\textrm{BC}|$ represent the number of such states. Then, instead of dealing with the original Markov chain defined on the full state space $\mathcal{X}$, we consider a smaller Markov chain defined on a consolidated state space $\bar{\mathcal{X}}_{\textrm{BC}} = \{\bar{\textrm{x}}_1, \bar{\textrm{x}}_2, \dots, \bar{\textrm{x}}_{N_\mathrm{BC}}\}\cup\{\bar{\mathrm{x}}_{N_\textrm{BC}+1}\}$. Here, the subset $\{\bar{\textrm{x}}_1, \bar{\textrm{x}}_2, \dots, \bar{\textrm{x}}_{N_\mathrm{BC}}\}$ corresponds one-to-one with the set $\mathcal{X}_\mathrm{BC}$, which is merely a reordering of indices from $\mathcal{X}_\mathrm{BC}$ just for ease of representation in the consolidated manner. The distinguished state $\bar{\mathrm{x}}_{N_\mathrm{BC}+1}$ corresponds to a unique absorbing state that consolidates all the failed states included in $E_\mathrm{BC}\equiv\mathcal{X}\backslash \mathcal{X}_\mathrm{BC}$. Fig.~\ref{fig:state_space_reduction_effect} illustrates that the magnitude of $\bar{\mathcal{X}}_{\textrm{BC}}$ can be significantly reduced compared to $\mathcal{X}$ by this consolidation.  
    \begin{figure}[H]
        \centering
        \resizebox{\textwidth}{!}{
        \includegraphics{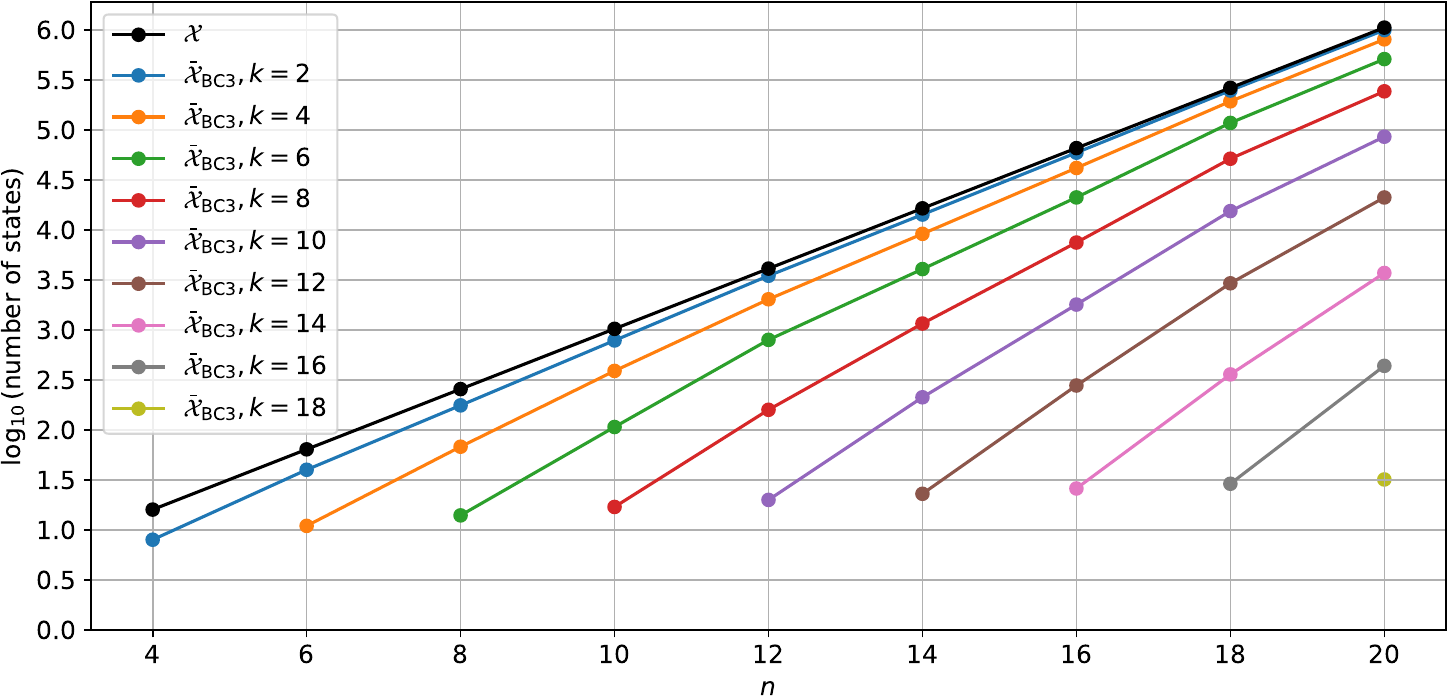}
        }
        \caption{Comparison between the sizes of the full state space $\mathcal{X}$ and the consolidated state spaces $\bar{\mathcal{X}}_\mathrm{BC}$'s for different $k$ and $n$ values where BC3 is applied as a balance condition}
        \label{fig:state_space_reduction_effect}
    \end{figure}

    The transitions of the Markov chain defined on the consolidated system state space $\bar{\mathcal{X}}_{\textrm{BC}}$ can be categorized into three types: 1) the transition between the nonfailed states within $\{\bar{\textrm{x}}_1, \bar{\textrm{x}}_2, \dots, \bar{\textrm{x}}_{N_\mathrm{BC}}\}$ that corresponds to the set $\mathcal{X}_{\textrm{BC}}$, 2) the transition from a nonfailed state (i.e., transient state) within $\mathcal{X}_{\textrm{BC}}$ to the consolidated failed state (i.e. absorbing state) $\bar{\textrm{x}}_{N_\mathrm{BC}+1}$, and 3) the self-transition within the absorbing state $\bar{\textrm{x}}_{N_\mathrm{BC}+1}$. The stochastic behavior of the first type of transition can be represented by a probability matrix $\mathbf{P}_{\textrm{BC}} = [P_{\mathrm{x}_a, \mathrm{x}_b}]$ where $(\mathrm{x}_a, \mathrm{x}_b) \in \mathcal{X}_{\textrm{BC}}^2$. The second type of transition, which occurs from nonfailed states to the absorbing state, can be represented by the probability (column) vector $\mathbf{e} - \mathbf{P}_{\textrm{BC}}\mathbf{e}$, where $\mathbf{e}$ is a column vector of ones with size $|\mathcal{X}_{\textrm{BC}}|$. The third transition is a recursion and thus has probability one. 

    \subsection*{Step 3: Derivation of the SNTF distribution}
    Building upon the above categorization on the system state transitions, we start this step by stating the following remark.
    \begin{remark}
        The transition probability matrix of the Markov chain defined on $\bar{\mathcal{X}}_{\mathrm{BC}}$, denoted by $\bar{\mathbf{P}}_{\mathrm{BC}}$, is expressed as following partitioned matrix:
        \begin{equation}
            \bar{\mathbf{P}}_{\mathrm{BC}} = 
            \begin{bmatrix}
                \mathbf{P}_\mathrm{BC} & \mathbf{e}  - \mathbf{P}_{\mathrm{BC}}\mathbf{e} \\
                \mathbf{0} & 1 
            \end{bmatrix}.
            \nonumber
        \end{equation}        
    \end{remark}

    Note that the SNTF, denoted as $M$, is a \textit{stopping time} of the underlying Markov chain, representing the first passage time to the absorbing state $\bar{\mathbf{x}}_{N_\textrm{BC}+1}$: $M=\inf\left\{ \nu\ge 1 | \mathbf{X}_\nu = \bar{\textrm{x}}_{N_\textrm{BC} + 1} \right\}$. Assuming that we have defined the perfectly functioning state of the consolidated state space to have subscript $1$ (i.e., both $\textrm{x}_1\in\mathcal{X}$ and $\bar{\textrm{x}}_1\in\bar{\mathcal{X}}_\textrm{BC}$ correspond to the system state $(1,1,\dots,1)$), an initial system state probability (row) vector of the consolidated Markov chain becomes $\pmb{\upalpha}_d = [1,0, \dots, 0] \in \mathbb{R}^{|\bar{\mathcal{X}}_{\textrm{BC}}|}$. For the SNTF $M$ to be realized as $m(\geq 1)$, the system state transition should occur among the subset $\{\bar{\textrm{x}}_1, \bar{\textrm{x}}_2, \dots, \bar{\textrm{x}}_{N_\mathrm{BC}}\}$ (or $\mathcal{X}_{\textrm{BC}}$) for $m-1$ times just before transitioning to $\bar{\mathrm{x}}_{N_\textrm{BC}+1}$ (or one of the states within $E_{\textrm{BC}}$) at the $m^\mathrm{th}$ transition. Thus, the pmf of $M$ is given by:
    \begin{equation}
        \mathbb{P}\left\{M=m\right\} = \pmb{\upalpha}_d \mathbf{P}_{\textrm{BC}}^{m-1}(\mathbf{e} - \mathbf{P}_{\textrm{BC}}\mathbf{e}),\quad \forall m \geq 1. \label{eq:pmf}
    \end{equation}
    The above interpretation of the random variable $M$ clearly implies that it follows a discrete phase-type distribution, leading to the following remark.
    \begin{remark}
        A random variable representing the shock numbers to system failure in a circular $k$-out-of-$n$: G balanced system, denoted by $M$, follows a discrete phase-type distribution. That is, $M\sim PH_d(\pmb{\upalpha}_d, \mathbf{P}_\mathrm{BC})$, where the parameters consist of the initial probability vector $\pmb{\upalpha}_d$ and the subtransition matrix $\mathbf{P}_\mathrm{BC}$.
    \end{remark}

    Although the formula for obtaining the distributional indices, such as the pmf given by Eq.~\eqref{eq:pmf}, of the discrete phase-type random variable are well-known and easy to calculate, their direct evaluation can sometimes be computationally burdensome. To be specific, as shown in Fig.~\ref{fig:state_space_reduction_effect}, even within the consolidated state space, the size of the matrix $\mathbf{P}_{\textrm{BC}}$ can be large for large $n$. Consequently, computing the matrix power $\mathbf{P}_{\textrm{BC}}^{m-1}$ quickly becomes intractable as $n$ and $m$ increase due to the computational resource limitations. To address this issue, we will propose an alternative calculation procedure which is computationally more efficient.

    Instead of repeatedly multiplying $\mathbf{P}_{\textrm{BC}}$ to obtain $\mathbf{P}_{\textrm{BC}}^{m-1}$, we illustrate an alternative approach that computes each element of $\mathbf{P}_{\textrm{BC}}^{m-1}$ separately. Let $P^{(m)}_{\textrm{x}_a, \textrm{x}_b}$ be an element of a matrix $\mathbf{P}_{\textrm{BC}}^{m}$, which denotes the $m$-step transition probability from a state $\textrm{x}_a = (x_{a,1}, x_{a,2}, \dots, x_{a,n})$ to a state $\textrm{x}_b = (x_{b,1}, x_{b,2}, \dots, x_{b,n})$, where each $x$ denotes the binary state of unit $i$ for $i\in\{1,2,\dots,n\}$. Since each of the $n$ units is affected by external shocks independently, the $m$-step transition probability can be expressed in a product form as follows: 
    \begin{equation}
        P_{\textrm{x}_a, \textrm{x}_b}^{(m)} = \prod_{i=1}^{n} \mathbb{P}\left\{\text{unit $i$'s state change from $x_{a,i}$ to $x_{b,i}$ after $m$ shocks}\right\}.
        \nonumber
    \end{equation}
    When considering a pair of initial and terminal states for a given unit (i.e., $x_{a,i}$ and $x_{b,i}$), it can take only one of four possible combinations: $(1,1), (1,0), (0,1)$, and $(0,0)$. Let us enumerate each case and examine how shocks would affect the unit. First, for a nonfailed unit to remain still nonfailed (i.e., $(x_{a,i}, x_{b,i})=(1,1)$), all $m$ shocks must be nonfatal. Second, for a nonfailed unit to be failed in the terminal state (i.e., $(x_{a,i}, x_{b,i})=(1,0)$), at least one shock out of $m$ shocks must be fatal. Third, it is impossible for a unit that starts in a failed state to become nonfailed (i.e., $(x_{a,i}, x_{b,i})=(0,1)$) at the end. Lastly, a unit that is already in a failed state will obviously remain failed (i.e., $(x_{a,i}, x_{b,i})=(0,0)$).

    Putting all these together, we have:
    \begin{align*}
        \mathbb{P}\left\{
        \begin{array}{c}
            \text{unit $i$'s state changes from} \\
            \text{$x_{a,i}$ to $x_{b,i}$ after $m$ shocks}
        \end{array}
        \right\} =
        \begin{cases}
            r^m & \textrm{if } (x_{a,i}, x_{b,i}) = (1,1),\\
            1 - r^m & \textrm{if } (x_{a,i}, x_{b,i}) = (1,0),\\
            0 & \textrm{if } (x_{a,i}, x_{b,i}) = (0,1),\\
            1 & \textrm{if } (x_{a,i}, x_{b,i}) = (0,0),
        \end{cases}
        \nonumber
    \end{align*}
    for all $i$ such that $i\in\{1,2,\dots,n\}$. Let us denote the above four state pair values as $\mathrm{y}_1 = (1,1)$, $\mathrm{y}_2 = (1,0)$, $\mathrm{y}_3 = (0,1)$, and $\mathrm{y}_4=(0,0$), and their respective counts as $c_l(\textrm{x}_{a},\textrm{x}_{b}) \equiv \sum_{i=1}^n \mathbf{1}_{\{(x_{a,i},x_{b,i})= \mathrm{y}_l\}}$ for $1\leq l \leq 4$, where $\mathbf{1}_{\{\cdot\}}$ is an indicator function. Then, the $m$-step transition probability simplifies into:
    \begin{equation}
        P_{\textrm{x}_a, \textrm{x}_b}^{(m)} = 
        \begin{cases}
            (r^m)^{c_1(\textrm{x}_{a},\textrm{x}_{b})} (1-r^m)^{c_2(\textrm{x}_{a},\textrm{x}_{b})} & \text{ if $c_3(\textrm{x}_a,\textrm{x}_b) \ge 0$,} \\
            0 &\text{ otherwise.}
        \end{cases}
        \label{eq:transition_prob_element_explicit_formula}
    \end{equation}
    It is noteworthy that the counts of $\mathrm{y}_l$'s can be easily computed and even stored in a data structure in advance for all possible pairs of system states $(\textrm{x}_a, \textrm{x}_b)$'s within $\mathcal{X}_\textrm{BC}^2$.

    \begin{proposition}
        Let $M$ be a random variable representing the shock numbers to system failure in a circular $k$-out-of-$n$: G balanced system. Then, its pmf values can be calculated using the following expression:
        \begin{align*}
            \mathbb{P}\left\{M=m\right\} = \sum_{\mathrm{x}_b \in \mathcal{X}_{\mathrm{BC}}} r^{(m-1)c_1(\mathrm{x}_{1},\mathrm{x}_{b})} \left( r^{c_1(\mathrm{x}_{1},\mathrm{x}_{b})} (1-r^m)^{c_2(\mathrm{x}_{1},\mathrm{x}_{b})} - (1-r^{m-1})^{c_2(\mathrm{x}_{1},\mathrm{x}_{b})} \right)
        \end{align*}
    \end{proposition}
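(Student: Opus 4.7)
The plan is to start from the phase-type pmf already established in Eq.~\eqref{eq:pmf}, strip off the matrix machinery by exploiting the monotone-failure structure of the chain, and then substitute the closed form from Eq.~\eqref{eq:transition_prob_element_explicit_formula}.

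First, I would rewrite $\mathbb{P}\{M=m\} = \pmb{\upalpha}_d \mathbf{P}_\mathrm{BC}^{m-1}(\mathbf{e}-\mathbf{P}_\mathrm{BC}\mathbf{e})$ as the telescoping difference $\pmb{\upalpha}_d \mathbf{P}_\mathrm{BC}^{m-1}\mathbf{e} - \pmb{\upalpha}_d \mathbf{P}_\mathrm{BC}^{m}\mathbf{e}$. Since $\pmb{\upalpha}_d$ places all mass on $\mathrm{x}_1=(1,\dots,1)$, each piece equals $\sum_{\mathrm{x}_b \in \mathcal{X}_\mathrm{BC}}(\mathbf{P}_\mathrm{BC}^{j})_{\mathrm{x}_1,\mathrm{x}_b}$ for $j\in\{m-1,m\}$, which is the probability of ending up in some nonfailed state after $j$ shocks while never having left $\mathcal{X}_\mathrm{BC}$ in the meantime.

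Next---and this is the crux---I would argue that the in-class restriction is automatic, so that $(\mathbf{P}_\mathrm{BC}^{j})_{\mathrm{x}_1,\mathrm{x}_b}$ coincides with the unrestricted full-chain $j$-step probability $P^{(j)}_{\mathrm{x}_1,\mathrm{x}_b}$ governed by Eq.~\eqref{eq:transition_prob_element_explicit_formula}. The reason is the irreversibility of failures: along any sample path $\mathrm{x}^{(0)}=\mathrm{x}_1,\mathrm{x}^{(1)},\dots,\mathrm{x}^{(j)}=\mathrm{x}_b$, each coordinate is nonincreasing in time, so every intermediate state dominates $\mathrm{x}_b$ componentwise. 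Since $\mathrm{x}_b\in\mathcal{X}_\mathrm{BC}$ means $\mathrm{x}_b$ contains at least one minimum tie-set $T\in\mathcal{T}$, every intermediate state also contains $T$ and therefore lies in $\mathcal{X}_\mathrm{BC}$; hence no admissible path to $\mathrm{x}_b$ can escape $\mathcal{X}_\mathrm{BC}$, and the restriction is vacuous. Consequently $\pmb{\upalpha}_d \mathbf{P}_\mathrm{BC}^{j}\mathbf{e} = \sum_{\mathrm{x}_b\in\mathcal{X}_\mathrm{BC}} P^{(j)}_{\mathrm{x}_1,\mathrm{x}_b}$.

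Finally, I would substitute Eq.~\eqref{eq:transition_prob_element_explicit_formula}: since $\mathrm{x}_1$ has no off-coordinates, $c_3(\mathrm{x}_1,\mathrm{x}_b)=0$ is automatic, so $P^{(j)}_{\mathrm{x}_1,\mathrm{x}_b}=r^{j\,c_1(\mathrm{x}_1,\mathrm{x}_b)}(1-r^j)^{c_2(\mathrm{x}_1,\mathrm{x}_b)}$. Combining the $j=m-1$ and $j=m$ contributions termwise inside $\sum_{\mathrm{x}_b\in\mathcal{X}_\mathrm{BC}}$ and pulling out the common factor $r^{(m-1)c_1(\mathrm{x}_1,\mathrm{x}_b)}$ recovers the claimed expression. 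The main obstacle is the intermediate closure argument in the second step; without it one would be stuck manipulating matrix powers of $\mathbf{P}_\mathrm{BC}$ directly and would lose the convenient product form of Eq.~\eqref{eq:transition_prob_element_explicit_formula}. Steps one, three, and four are essentially bookkeeping.
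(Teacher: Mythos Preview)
Your proposal is correct and follows essentially the same route as the paper: telescope the pmf into $\pmb{\upalpha}_d \mathbf{P}_\mathrm{BC}^{m-1}\mathbf{e} - \pmb{\upalpha}_d \mathbf{P}_\mathrm{BC}^{m}\mathbf{e}$, extract the first row as a sum over $\mathcal{X}_\mathrm{BC}$, substitute Eq.~\eqref{eq:transition_prob_element_explicit_formula}, and factor. Your closure argument---that componentwise monotonicity forces every intermediate state on a path ending at $\mathrm{x}_b\in\mathcal{X}_\mathrm{BC}$ to also lie in $\mathcal{X}_\mathrm{BC}$, so that $(\mathbf{P}_\mathrm{BC}^{j})_{\mathrm{x}_1,\mathrm{x}_b}$ coincides with the unrestricted $P^{(j)}_{\mathrm{x}_1,\mathrm{x}_b}$---is a welcome addition: the paper silently identifies these two quantities when it declares Eq.~\eqref{eq:transition_prob_element_explicit_formula} to be ``an element of a matrix $\mathbf{P}_\mathrm{BC}^m$'' but derives it as a product over independent units, and your argument is exactly what justifies that identification.
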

    \begin{proof}
        By applying the explicit formula for calculating the $m$-step transition probability, as derived in Eq.~\eqref{eq:transition_prob_element_explicit_formula}, to the pmf expression given in Eq.~\eqref{eq:pmf}, we obtain the following result.
        \begin{align}
            \mathbb{P}\left\{M=m\right\} &= \pmb{\upalpha}_d \mathbf{P}_{\mathrm{BC}}^{m-1}(\mathbf{e} - \mathbf{P}_{\mathrm{BC}}\mathbf{e}) \nonumber \\ 
            &= \pmb{\upalpha}_d \mathbf{P}_{\mathrm{BC}}^{m-1}\mathbf{e} - \pmb{\upalpha}_d \mathbf{P}_{\mathrm{BC}}^{m}\mathbf{e} \nonumber\\ 
            &= [1, 0,\dots, 0] \mathbf{P}_{\mathrm{BC}}^{m-1} [1, 1, \dots, 1]^\top - [1, 0,\dots, 0] \mathbf{P}_{\mathrm{BC}}^{m} [1, 1, \dots, 1]^\top \nonumber\\ 
            &= \sum_{\mathrm{x}_b \in \mathcal{X}_{\mathrm{BC}}} \left( (r^m)^{c_1(\mathrm{x}_{1},\mathrm{x}_{b})}  (1-r^m)^{c_2(\mathrm{x}_{1},\mathrm{x}_{b})} - (r^{m-1})^{c_1(\mathrm{x}_{1},\mathrm{x}_{b})} (1-r^{m-1})^{c_2(\mathrm{x}_{1},\mathrm{x}_{b})} \right) \nonumber \\ 
            &= \sum_{\mathrm{x}_b \in \mathcal{X}_{\mathrm{BC}}} r^{(m-1)c_1(\mathrm{x}_{1},\mathrm{x}_{b})} \left( r^{c_1(\mathrm{x}_{1},\mathrm{x}_{b})} (1-r^m)^{c_2(\mathrm{x}_{1},\mathrm{x}_{b})} - (1-r^{m-1})^{c_2(\mathrm{x}_{1},\mathrm{x}_{b})} \right). \label{eq:efficient_pmf}
        \end{align}
    \end{proof}
    
    In Eq.~\eqref{eq:efficient_pmf}, note that we only require the values of $c_1(\cdot,\cdot)$ and $c_2(\cdot,\cdot)$ with respect to the initial state $\textrm{x}_1$. Therefore, the previously mentioned precomputation needs to be performed only for the state pairs $(\textrm{x}_1, \textrm{x}_b)$'s for all $\textrm{x}_b$ within $\mathcal{X}_{\textrm{BC}}$. Moreover, this computation can be performed efficiently, as $c_1(\textrm{x}_1, \textrm{x}_b) = \sum_{i=1}^n \textrm{x}_{b,i}$ and $c_2(\textrm{x}_1, \textrm{x}_b) = n - c_1(\textrm{x}_1, \textrm{x}_b)$, given that the initial state $\textrm{x}_1$ is a fully functioning state. Ultimately, when evaluating $\mathbb{P}\left\{M=m\right\}$, we can simply substitute these precomputed values for any $m$, allowing this approach to be efficiently applied even larger $n$.

    Next, the definition of the $p^\mathrm{th}$ moment of the SNTF is given by:
    \begin{equation}
        \begin{aligned}
            \mathbb{E}[M^p] &= \sum_{m=1}^{\infty} m^p \mathbb{P}\left\{M=m\right\}. \nonumber
        \end{aligned}
    \end{equation}
    Plugging the expression derived as Eq.~\eqref{eq:efficient_pmf} into $\mathbb{P}\left\{M=m\right\}$, we obtain
    \begin{equation}
        \mathbb{E}[M^p] = \sum_{m=1}^{\infty} m^p \left( \sum_{\textrm{x}_b \in \mathcal{X}_{\textrm{BC}}} r^{(m-1)c_1(\textrm{x}_{1},\textrm{x}_{b})} \left( r^{c_1(\textrm{x}_{1},\textrm{x}_{b})} (1-r^m)^{c_2(\textrm{x}_{1},\textrm{x}_{b})} - (1-r^{m-1})^{c_2(\textrm{x}_{1},\textrm{x}_{b})} \right) \right).
        \nonumber
    \end{equation}

    For $p=1$, the above expression reduces to $\mathbb{E}[M]$, referred to as the \textit{mean shock numbers to system failure} (MSNTF). It is well-known that the mean and factorial moments of a discrete phase-type distribution can also be expressed explicitly as follows~\citep{bladt2017}:
    \begin{align}
        \mathbb{E}[M] &= \pmb{\upalpha}_d \left(\mathbf{I} - \mathbf{P}_{\textrm{BC}}\right)^{-1} \mathbf{e}\quad \textrm{and}\label{eq:msntf} \\ 
        \mathbb{E}[(M)_p] &= p!  \pmb{\upalpha}_d \left(\mathbf{I} - \mathbf{P}_{\textrm{BC}}\right)^{-p} \mathbf{P}_{\textrm{BC}}^{p-1} \mathbf{e}, \nonumber      
    \end{align}
    where $(M)_p \equiv M(M-1)\cdots(M-p+1)$ denotes the $p^\mathrm{th}$ falling factorial of $M$. 
    
    While these formulas are mathematically compact and exact, they involve matrix inversion and high-dimensional matrix multiplications, which can be computationally expensive, particularly when $\mathbf{P}_{\textrm{BC}}$ is a large-sized matrix. In contrast, the summation-based formula we propose eliminates the need for such operations, making it computationally more efficient and practical for systems with a large state space.

    \subsection*{Step 4: Derivation of the TTF distribution}
    As shown in the previous step, the SNTF $M$ follows a discrete phase-type distribution: $M \sim PH_d (\pmb{\upalpha}_d, \mathbf{P}_{\textrm{BC}})$. If we assume that the inter-shock time, represented by another random variable $Y$, follows a phase-type distribution, then the distribution of the TTF, denoted by the random variable $Z$, also follows a phase-type distribution. In this step, we formally derive this result.

    Assume that the inter-arrival times $Y_m$ between the $(m-1)^\mathrm{th}$ and $m^\mathrm{th}$ external shocks are independent and identically distributed (i.i.d.) random variables for $m=1,2,3, \dots$, all following the same distribution as the random variable $Y$, which follows a phase-type distribution characterized by the initial probability (row) vector $\pmb{\upalpha}_c$ and the subgenerator matrix $\mathbf{T}_c$. Also, we assume that the $0^{\mathrm{th}}$ external shock occurs at time $0$. Then, the TTF $Z$ can be interpreted as the sum of inter-arrival times between the shocks up to the one that causes the system failure. This allows us to express the TTF $Z$ as the compound random variable: $Z = \sum_{m=1}^{M} Y_m$.

    Applying the closure property known for the phase-type distributions, we get the following result.
    \begin{proposition} % bladt2017, pp141, Theorem 3.1.27. 
    Consider a circular $k$-out-of-$n$: G balanced system. If its SNTF $M$ follows a discrete phase-type distribution, that is, $M \sim PH_d(\pmb{\upalpha}_d, \mathbf{T}_d)$, and its inter-shock time $Y$ follows a phase-type distribution, that is, $Y \sim PH_c(\pmb{\upalpha}_c, \mathbf{T}_c)$, then its TTF $Z$ follows a phase-type distribution as follows:
    \[
    Z \sim PH_c\left(\pmb{\upalpha}_d \otimes \pmb{\upalpha}_{c}, \mathbf{I} \otimes \mathbf{T}_c + \mathbf{T}_d \otimes (-\mathbf{T}_c \mathbf{e} \pmb{\upalpha}_c)\right)
    \]
    where $Z=\sum_{m=1}^M Y_m$. \label{prop:cp}
    \end{proposition}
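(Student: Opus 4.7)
The plan is to build an explicit continuous-time Markov chain whose absorption time is distributed exactly as $Z=\sum_{m=1}^M Y_m$, and then read off the phase-type representation from the construction. Since closure of phase-type distributions under discrete-phase-type compounding is a classical property, the argument reduces to carefully enumerating the joint dynamics on a product phase space and verifying that the Kronecker-product structure reproduces the asserted parameters.

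Concretely, I would enlarge the phase space to the Cartesian product of the discrete-PH phases of $M$ (tracking the current shock-count phase) and the continuous-PH phases of $Y$ (tracking the phase of the ongoing inter-shock timer), augmented by a single absorbing state $\Delta$ representing system failure. The joint process is initialized in state $(i,j)$ with probability $(\pmb{\upalpha}_d)_i(\pmb{\upalpha}_c)_j$, which is precisely the $(i,j)$-entry of $\pmb{\upalpha}_d\otimes\pmb{\upalpha}_c$.

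Next, I would enumerate three transition types and match each to a Kronecker block of the generator. First, while the current timer runs, the continuous component evolves via $\mathbf{T}_c$ and the discrete component is frozen, yielding the block $\mathbf{I}\otimes\mathbf{T}_c$. Second, a shock corresponds to absorption of the continuous PH, which from phase $j$ happens at rate $(-\mathbf{T}_c\mathbf{e})_j$; conditioned on this event the discrete chain jumps from $i$ to a transient state $k$ with probability $(\mathbf{T}_d)_{i,k}$ and a fresh timer restarts in phase $\ell$ with probability $(\pmb{\upalpha}_c)_\ell$, producing the block $\mathbf{T}_d\otimes\bigl((-\mathbf{T}_c\mathbf{e})\pmb{\upalpha}_c\bigr)$, i.e., exactly $\mathbf{T}_d\otimes(-\mathbf{T}_c\mathbf{e}\pmb{\upalpha}_c)$ as stated. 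Third, if the shock instead triggers absorption of the discrete chain, which conditional on the shock has probability $(\mathbf{e}-\mathbf{T}_d\mathbf{e})_i$, the joint chain leaves for $\Delta$; the corresponding exit-rate vector is $(\mathbf{e}-\mathbf{T}_d\mathbf{e})\otimes(-\mathbf{T}_c\mathbf{e})$, which is exactly the negative of the row sums of $\mathbf{I}\otimes\mathbf{T}_c+\mathbf{T}_d\otimes(-\mathbf{T}_c\mathbf{e}\pmb{\upalpha}_c)$, confirming internal consistency of the subgenerator.

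Finally, I would identify the absorption time of this CTMC with $Z$ in law: between consecutive shocks it accumulates an i.i.d.\ $PH_c(\pmb{\upalpha}_c,\mathbf{T}_c)$ sojourn, and the embedded shock-counting chain at shock epochs is, by construction, a discrete PH chain with representation $(\pmb{\upalpha}_d,\mathbf{T}_d)$ absorbed at the first failure, so the absorption time equals $\sum_{m=1}^M Y_m=Z$. The only step requiring a little care is justifying that the restart of the continuous timer at each shock is independent of the past; this is immediate from the i.i.d.\ assumption on the $Y_m$ and the strong Markov property, but it is the place where hand-waving is most tempting. As a purely algebraic backup, one could instead compare Laplace--Stieltjes transforms via $\mathbb{E}[e^{-sZ}]=\mathbb{E}[\phi_Y(s)^M]$, substitute the closed forms for the PH transform and the PH probability generating function, and verify equality with $\pmb{\upalpha}(s\mathbf{I}-\mathbf{T})^{-1}(-\mathbf{T}\mathbf{e})$ for the claimed $(\pmb{\upalpha},\mathbf{T})$.
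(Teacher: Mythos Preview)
Your construction is correct and in fact supplies strictly more than the paper does. The paper's own proof is a one-line invocation of the closure property of phase-type distributions under discrete-PH compounding, citing Bladt and Nielsen (2017) for the explicit parameter formula; no CTMC is built and no Kronecker block is verified. Your proposal instead \emph{reproves} that closure property by explicitly constructing the product-space CTMC, matching each transition type to its Kronecker block, and checking the exit-rate consistency. What your approach buys is a self-contained argument that makes the origin of each term in $\mathbf{I}\otimes\mathbf{T}_c + \mathbf{T}_d\otimes(-\mathbf{T}_c\mathbf{e}\pmb{\upalpha}_c)$ transparent; what the paper's approach buys is brevity, since in the context of the article the result is a tool rather than a target. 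Either is acceptable, and your Laplace--Stieltjes backup is a nice redundancy.
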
 
    \begin{proof}
    Since we have already established that the SNTF $M$ follows a discrete phase-type distribution, the theorem follows directly from the closure properties of phase-type distributions~\citep{bladt2017}. Consequently, $Z$ follows a phase-type distribution, with its parameters composed of those of the discrete- and continuous phase-type distributions, $M$ and $Y$, respectively.
    \end{proof}

    As shown in Proposition~\ref{prop:cp}, the TTF $Z$ follows a phase-type distribution with newly defined parameters $\pmb{\upalpha}_Z = \pmb{\upalpha}_d \otimes \pmb{\upalpha}_{c}$ and $\mathbf{T}_Z = \mathbf{I} \otimes \mathbf{T}_c + \mathbf{P}_{\textrm{BC}} \otimes (-\mathbf{T}_c \mathbf{e} \pmb{\upalpha}_c)$. From the properties of the phase-type distributions, the probability density function of $Z$, $f(z)$, is given by:
    \begin{equation}
        f(z) = - \pmb{\upalpha}_Z \exp \left(z \mathbf{T}_Z \right) \mathbf{T}_Z \mathbf{e}, \label{eq:pdf}
    \end{equation}
    where $\mathbf{e}$ is a vector of ones with the length $K|\mathcal{X}_{\textrm{BC}}|$. Furthermore, the $p^\mathrm{th}$ moment of the TTF, $\mathbb{E}[Z^p]$, can be obtained as~\citep{bladt2017}:
    \begin{equation}
        \mathbb{E}[Z^p] = p! \pmb{\upalpha}_Z\left(-\mathbf{T}_Z^{-1}\right)^{p}\mathbf{e}\quad\mathrm{for}\ p=1,2,\dots .
        \nonumber
    \end{equation}
    In particular, the \textit{mean time to system failure} (MTTF) with $p=1$, $\mathbb{E}[Z]$, is given by:
    \begin{equation}
        \mathbb{E}[Z] = -\pmb{\upalpha}_Z \mathbf{T}_Z^{-1} \mathbf{e}. \nonumber
    \end{equation}
    
    This section has detailed the overall framework for deriving the lifetime distributions of the target system. In the next section, we will present a specific case study and provide a comprehensive explanation of how the previously described procedure is applied to the target system.

\section{A Descriptive Case ($k=2,n=4,r=0.7$, BC3, Erlang)} \label{sec:descriptive_case}
    Let us consider a circular 2-out-of-4: G balanced system in which unit reliability is $r=0.7$, BC3 (i.e., center of gravity at the origin) is applied as a balance condition, and the inter-shock times follow i.i.d. $Erlang(\alpha=2,\lambda=2)$ distribution. The system begins in the perfect system state, $(1,1,1,1)$, and experiences sequential shocks that independently cause unit failures, eventually leading to the system failure. Fig.~\ref{fig:example_dtmc} is the transition probability diagram of the target system as a discrete-time Markov chain (DTMC). 
    
    Using the full state notation in which each $\textrm{x}\in\mathcal{X}$ denotes a system state, the DTMC has $|\mathcal{X}|=2^4=16$ different states comprised of seven transient states, say ${\mathcal{X}}_\mathrm{BC3}=\{\textrm{x}_{1}, \textrm{x}_{2}, \textrm{x}_{3}, \textrm{x}_{5}, \textrm{x}_{6}, \textrm{x}_{9}, \textrm{x}_{11}\}$ (where $\textrm{x}_1$ is the initial state), and nine absorbing states, say $E_\textrm{BC3}=\{\textrm{x}_{4},\textrm{x}_{7},\textrm{x}_{8},\textrm{x}_{10},\textrm{x}_{12},\textrm{x}_{13},\textrm{x}_{14},\textrm{x}_{15},\textrm{x}_{16}\}$. Each absorbing state corresponds to the state that does not include any minimum tie-set hence cannot be operational even by the rebalancing mechanism. Under this setting, the shock number to system failure (SNTF) is defined as the number of transitions the DTMC undergoes, starting from the initial state $\textrm{x}_1$ and continuing until it enters one of the absorbing states.

    To compactify the state space so that we can exploit the useful properties of phase-type distribution, the DTMC is converted to another DTMC defined by a smaller state space $\bar{\mathcal{X}}_\mathrm{BC3}=\{\bar{\mathrm{x}}_1,\bar{\mathrm{x}}_2,...,\bar{\mathrm{x}}_7\}\cup\{\bar{\mathrm{x}}_8\}$ hence $|\bar{\mathcal{X}}_\mathrm{BC3}|=8$. Here, the subset $\{\bar{\mathrm{x}}_1,\bar{\mathrm{x}}_2,...,\bar{\mathrm{x}}_7\}\subset\bar{\mathcal{X}}_{\textrm{BC3}}$ corresponds one-to-one with the set of transient states $\{ \textrm{x}_{1}, \textrm{x}_{2}, \textrm{x}_{3}, \textrm{x}_{5}, \textrm{x}_{6}, \textrm{x}_{9}, \textrm{x}_{11} \}=\mathcal{X}_\textrm{BC3}\subset\mathcal{X}$ and the state $\bar{\mathrm{x}}_8\in\bar{\mathcal{X}}_{\textrm{BC3}}$ consolidates the set of absorbing states $E_\textrm{BC3}\subset\mathcal{X}$. In summary, the new DTMC is an absorbing DTMC with the initial state $\bar{\mathrm{x}}_1$ and an absorbing state $\bar{\mathrm{x}}_8$, which still maintains enough information to obtain the probability distribution of SNTF. Table~\ref{table:example_system_states} enumerates every possible system state in the target system and summarizes the state-wise relationship between the two DTMCs using different notations.

    \begin{figure}[H]
        \centering
        \includegraphics[width = 0.8\textwidth]{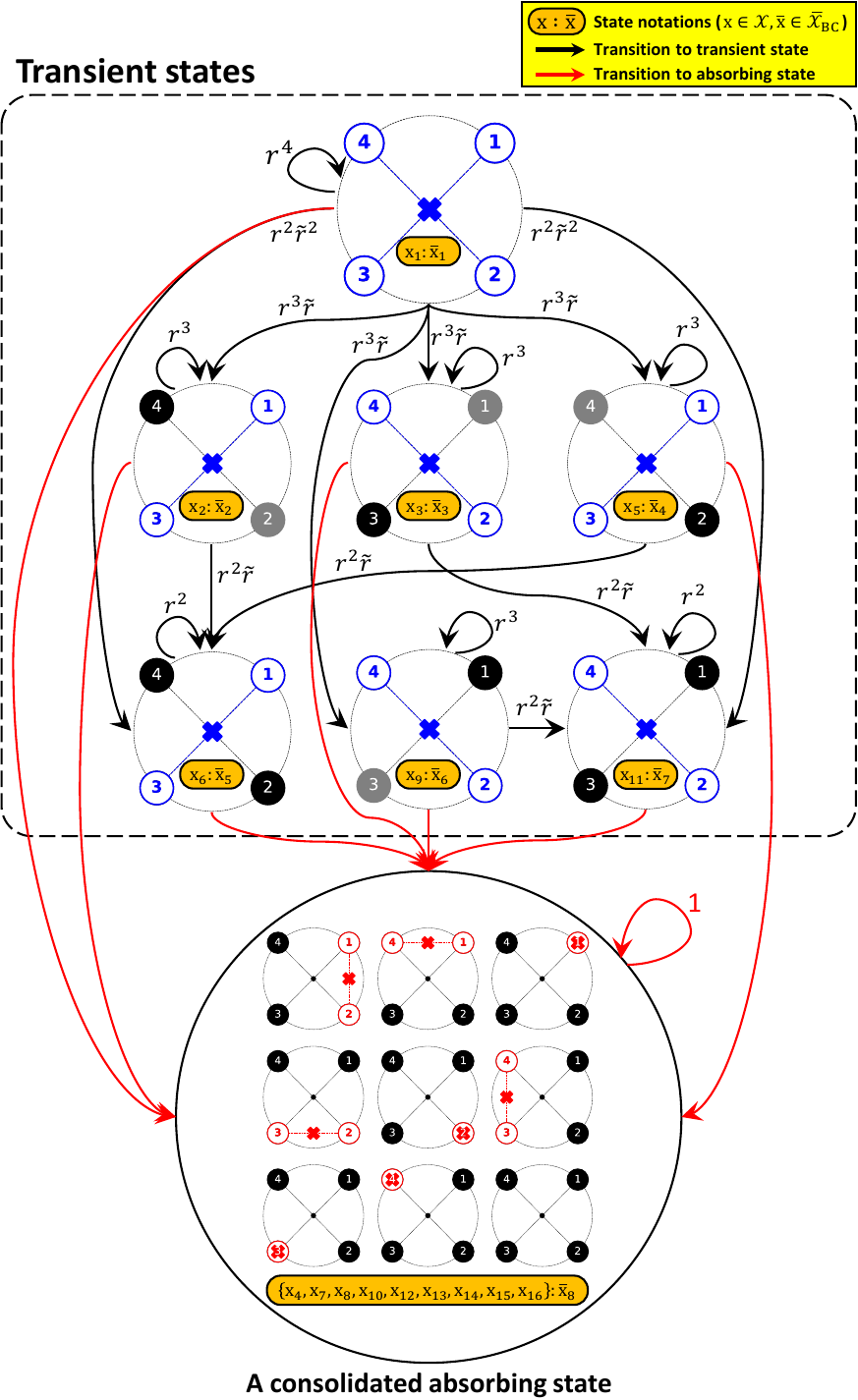}
        \caption{The transition probability diagram for a CknGB system with $k=2$ and $n=4$ where BC3 is applied as a balance condition and $\tilde{r}\equiv1-r$}
        \label{fig:example_dtmc}
    \end{figure}    

    \begin{table}[H]
        \centering
        \caption{System state enumeration $(k=2, n=4, \textrm{BC3})$}
        \resizebox{\textwidth}{!}{%        
        \begin{tabular}{|c|c|c|c|c|}
        \hline
        \textbf{No.} &
          \textbf{System State Tuple} &
          \textbf{\begin{tabular}[c]{@{}c@{}}Full State Notation:\\ $\mathrm{x}\in\mathcal{X}=\{\mathrm{x}_1,\mathrm{x}_2,...,\mathrm{x}_{16}\}$\end{tabular}} &
          \textbf{\begin{tabular}[c]{@{}c@{}}Consolidated State Notation:\\ $\bar{\mathrm{x}}\in\bar{\mathcal{X}}_\mathrm{BC3}=\{\bar{\mathrm{x}}_1,\bar{\mathrm{x}}_2,...,\bar{\mathrm{x}}_7\}\cup\{\bar{\mathrm{x}}_8\}$\end{tabular}} &
          \textbf{System Status} \\ \hline
        1  & $(1,1,1,1)$          & $\mathrm{x}_1$    & $\bar{\mathrm{x}}_1$                           & Nonfailed                                                            \\ \hline
        2  & $(1,1,1,0)$          & $\mathrm{x}_2$    & $\bar{\mathrm{x}}_2$                           & \begin{tabular}[c]{@{}c@{}}Nonfailed\\ (by rebalancing)\end{tabular} \\ \hline
        3  & $(1,1,0,1)$          & $\mathrm{x}_3$    & $\bar{\mathrm{x}}_3$                           & \begin{tabular}[c]{@{}c@{}}Nonfailed\\ (by rebalancing)\end{tabular} \\ \hline
        4  & {$\mathbf{(1,1,0,0)}$} & $\mathrm{x}_4$    & $\mathrm{\mathbf{\bar{x}}}_\mathbf{8}$                  & \textbf{Failed}                                                      \\ \hline
        5  & $(1,0,1,1)$          & $\mathrm{x}_5$    & $\bar{\mathrm{x}}_4$                           & \begin{tabular}[c]{@{}c@{}}Nonfailed\\ (by rebalancing)\end{tabular} \\ \hline
        6  & $(1,0,1,0)$          & $\mathrm{x}_6$    & $\bar{\mathrm{x}}_5$                           & Nonfailed                                                            \\ \hline
        7  & {$\mathbf{(1,0,0,1)}$} & $\mathrm{x}_7$    & \multirow{2}{*}{$\mathrm{\mathbf{\bar{x}}}_\mathbf{8}$} & \textbf{Failed}                                                               \\ \cline{1-3} \cline{5-5} 
        8  & {$\mathbf{(1,0,0,0)}$} & $\mathrm{x}_8$    &                               & \textbf{Failed}                                                      \\ \hline
        9  & $(0,1,1,1)$          & $\mathrm{x}_9$    & $\bar{\mathrm{x}}_6$                           & \begin{tabular}[c]{@{}c@{}}Nonfailed\\ (by rebalancing)\end{tabular} \\ \hline
        10 & {$\mathbf{(0,1,1,0)}$} & $\mathrm{x}_{10}$ & $\mathrm{\mathbf{\bar{x}}}_\mathbf{8}$                  & \textbf{Failed}                                                      \\ \hline
        11 & $(0,1,0,1)$          & $\mathrm{x}_{11}$ & $\bar{\mathrm{x}}_7$                           & Nonfailed                                                            \\ \hline
        12 & {$\mathbf{(0,1,0,0)}$} & $\mathrm{x}_{12}$ & \multirow{5}{*}{$\mathrm{\mathbf{\bar{x}}}_\mathbf{8}$} & \textbf{Failed}                                                      \\ \cline{1-3} \cline{5-5} 
        13 & {$\mathbf{(0,0,1,1)}$} & $\mathrm{x}_{13}$ &                               & \textbf{Failed}                                                      \\ \cline{1-3} \cline{5-5} 
        14 & {$\mathbf{(0,0,1,0)}$} & $\mathrm{x}_{14}$ &                               & \textbf{Failed}                                                      \\ \cline{1-3} \cline{5-5} 
        15 & {$\mathbf{(0,0,0,1)}$} & $\mathrm{x}_{15}$ &                               & \textbf{Failed}                                                      \\ \cline{1-3} \cline{5-5} 
        16 & {$\mathbf{(0,0,0,0)}$} & $\mathrm{x}_{16}$ &                               & \textbf{Failed}                                                      \\ \hline
        \end{tabular}
        }
        \label{table:example_system_states}
    \end{table}

    For the DTMC with the reduced state space $\bar{\mathcal{X}}_\mathrm{BC3}$, its transition probability matrix $\bar{\mathbf{P}}_\textrm{BC3}$ can be explicitly expressed by the following $(8\times 8)$-sized partitioned square matrix where $r$ is the one-step reliability of a unit and $\tilde{r}\equiv 1-r$: 

    \begin{align}
        \bar{\mathbf{P}}_\mathrm{BC3} &=  \begin{bmatrix}
            \mathbf{P}_\mathrm{BC3} & \mathbf{e}-{\mathbf{P}}_\mathrm{BC3}\mathbf{e} \\ 
            \mathbf{0} & 1
            \end{bmatrix} \nonumber\\ 
            &= \begin{bNiceArray}{cccccccl}[first-col,first-row]
                	&\bar{\mathrm{x}}_1	&\bar{\mathrm{x}}_2	&\bar{\mathrm{x}}_3	&\bar{\mathrm{x}}_4	&\bar{\mathrm{x}}_5	&\bar{\mathrm{x}}_6	&\bar{\mathrm{x}}_7	&\bar{\mathrm{x}}_8\\
                \bar{\mathrm{x}}_1	&r^4	&r^3\tilde{r}	&r^3\tilde{r}	&r^3\tilde{r}	&r^2\tilde{r}^2	&r^3\tilde{r}	&r^2\tilde{r}^2	&1-(r^4 + 4r^3\tilde{r} + 2r^2\tilde{r}^2)\\
                \bar{\mathrm{x}}_2	&0	&r^3	&0	&0	&r^2 \tilde{r}	&0	&0	&1-(r^3 + r^2\tilde{r})\\
                \bar{\mathrm{x}}_3	&0	&0	&r^3	&0	&0	&0	&r^2\tilde{r}	&1-(r^3 + r^2\tilde{r})\\
                \bar{\mathrm{x}}_4	&0	&0	&0	&r^3	&r^2\tilde{r}	&0	&0	&1-(r^3 + r^2\tilde{r})\\
                \bar{\mathrm{x}}_5	&0	&0	&0	&0	&r^2	&0	&0	&1-r^2\\
                \bar{\mathrm{x}}_6	&0	&0	&0	&0	&0	&r^3	&r^2\tilde{r}	&1-(r^3+r^2\tilde{r})\\
                \bar{\mathrm{x}}_7	&0	&0	&0	&0	&0	&0	&r^2	&1-r^2\\
                \bar{\mathrm{x}}_8	&0	&0	&0	&0	&0	&0	&0	&1
                \end{bNiceArray} \nonumber\\
                \stackrel{r=0.7}{\Longrightarrow}   \bar{\mathbf{P}}_\mathrm{BC3} & = 
                    \begin{bmatrix}
                        \begin{smallmatrix}
                        0.240 & 0.103 & 0.103 & 0.103 & 0.044 & 0.103 & 0.044 & 0.260 \\
                        \cdot & 0.343 & \cdot & \cdot & 0.147 & \cdot & \cdot & 0.510 \\
                        \cdot & \cdot & 0.343 & \cdot & \cdot & \cdot & 0.147 & 0.510 \\
                        \cdot & \cdot & \cdot & 0.343 & 0.147 & \cdot & \cdot & 0.510 \\
                        \cdot & \cdot & \cdot & \cdot & 0.490 & \cdot & \cdot & 0.510 \\
                        \cdot & \cdot & \cdot & \cdot & \cdot & 0.343 & 0.147 & 0.510 \\
                        \cdot & \cdot & \cdot & \cdot & \cdot & \cdot & 0.490 & 0.510 \\
                        \cdot & \cdot & \cdot & \cdot & \cdot & \cdot & \cdot & 1.000 \\
                        \end{smallmatrix}
                    \end{bmatrix},  \label{eq:P_bar_BC3_r_0.7}
    \end{align}
    where zero values are replaced by the centered dots in the above matrix.
    
    The submatrix ${\mathbf{P}}_\mathrm{BC3}$ is a $(7\times 7)$-sized one-step transition probability matrix between the nonfailed states $\bar{\textrm{x}}_1, \bar{\textrm{x}}_2, \dots, \bar{\textrm{x}}_7\in \bar{\mathcal{X}}_\textrm{BC3}$, which is identical to the one-step transition probability matrix between the nonfailed states $\textrm{x}_{1}, \textrm{x}_{2}, \textrm{x}_{3}, \textrm{x}_{5}, \textrm{x}_{6}, \textrm{x}_{9}, \textrm{x}_{11}\in\mathcal{X}$. The column vector $\mathbf{e}$ is a seven-dimensional vector of ones and $\mathbf{0}$ is a seven-dimensional row vector of zeros. Now, we know from Eq.~\eqref{eq:pmf} in Section~\ref{sec:modeling_analysis_lifetime} that an absorbing DTMC defined by such a transition probability matrix $\bar{\mathbf{P}}_\mathrm{BC3}$ has the following probability mass function for its time to absorption: 
    \begin{align*}
        \mathbb{P}\left\{M=m\right\}=\pmb{\upalpha}_d {\mathbf{P}}_\mathrm{BC3}^{m-1}\left(\mathbf{e}-{\mathbf{P}}_\mathrm{BC3}\mathbf{e}\right)\quad\mathrm{for}\ m=1,2,\dots,
    \end{align*}
    where $M$ is a random variable denoting the number of transitions to absorption and $\pmb{\upalpha}_d=[1,0,\dots,0]$ is a seven-dimensional initial probability (row) vector that corresponds to seven transient states in $\bar{\mathcal{X}}_\textrm{BC3}$. As such, the SNTF of a circular $2$-out-of-$4$: G balanced system follows the following discrete phase-type distribution
    \begin{align}
        M\sim PH_d\left(\pmb{\upalpha}_d=[1,0,...,0], \mathbf{T}_d = {\mathbf{P}}_\mathrm{BC3}\right),
    \end{align}    
    where $\mathbf{P}_\mathrm{BC3}$ is obtained as included in Eq.~\eqref{eq:P_bar_BC3_r_0.7} and therefore the mean SNTF can be calculated by $\mathbb{E}[M]=\sum_{m=0}^\infty m\mathbb{P}\left\{M=m\right\}=\pmb{\upalpha}_d\left(\mathbf{I}-\mathbf{P}_\mathrm{BC3}\right)^{-1}\mathbf{e}$ as explained in Eq~\eqref{eq:msntf}. Thus far, we have derived the probability distribution of the SNTF for the target system, representing its discrete-time lifetime. 

    Next, we will introduce the inter-shock time distribution, $Erlang(\alpha=2,\lambda=2)$, to transform the SNTF distribution into a continuous probability distribution for the time to system failure (TTF). Since the Erlang distribution is interpreted as a mixture of exponential distributions, we first express the inter-shock time $Y\sim Erlang(\alpha=2,\lambda=2)$ by the following equivalent phase-type distribution: 
    \begin{align}
        Y\sim PH_c\left(\pmb{\upalpha}_c=[1,0],\mathrm{\textbf{T}}_c=\begin{bmatrix} -2 & 2 \\ 0 & -2 \end{bmatrix}\right),
    \end{align}
    where $\pmb{\upalpha}_c$ is the initial probability (row) vector for the corresponding continuous-time Markov chain (CTMC) and $\mathrm{\textbf{T}}_c$ is the subgenerator matrix consisting of the transition rates between the phases constructing the CTMC. The TTF, denoted as a random variable $Z$, is expressed as the random sum of $Y_i$ values: $Z=Y_1+Y_2+\cdots+Y_{M}$. According to Proposition~\ref{prop:cp}, this results in another phase-type distribution having new parameters $\pmb{\upalpha}_Z$ and $\mathbf{T}_Z$, as follows:
    % \begin{align}
    %     Z \sim PH_c \left( \pmb{\upalpha}_Z = [1,0,\dots,0]\otimes [1,0],\ \mathbf{T}_Z = \mathbf{I} \otimes \begin{bmatrix} -2 & 2 \\ 0 & -2 \end{bmatrix} + {\mathbf{P}}_\mathrm{BC3} \otimes \left(-\begin{bmatrix} -2 & 2 \\ 0 & -2 \end{bmatrix} \mathbf{e} [1,0]\right) \right), \label{eq:new_CTPH}
    % \end{align}

    \begin{align}
        Z \sim PH_c \Bigg( \pmb{\upalpha}_Z = [1,0,\dots,0]\otimes [1,0],\  
        &\mathbf{T}_Z =
        \mathbf{I} \otimes 
        \begin{bmatrix} -2 & 2 \\ 0 & -2 \end{bmatrix} \nonumber\\
        &\quad+ {\mathbf{P}}_\mathrm{BC3} \otimes 
        \left( -\begin{bmatrix} -2 & 2 \\ 0 & -2 \end{bmatrix} 
        \mathbf{e} [1,0] \right)
        \Bigg), \label{eq:new_CTPH}
    \end{align}

    where $[1,0,\dots,0]$ is a seven-dimensional row vector, $\mathbf{I}$ is a $(7\times 7)$-sized identity matrix, ${\mathbf{P}}_\mathrm{BC3}$ is a $(7\times 7)$-sized matrix as included in Eq.~\eqref{eq:P_bar_BC3_r_0.7}, and $\mathbf{e}$ is a two-dimensional column vector of ones. For the readers' information, we put the explicit numerical values of $\pmb{\upalpha}_Z$ and $\mathbf{T}_Z$ in Eq.~\eqref{eq:new_CTPH} below:
    \begin{gather}
        \pmb{\upalpha}_Z = [1,0,0,0,0,0,0,0,0,0,0,0,0,0]\ \mathrm{and} \nonumber\\ 
        \mathbf{T}_Z =
            \begin{bmatrix}
                {\scriptsize
                \begin{smallmatrix}
                -2.000 & 2.000 & \cdot & \cdot & \cdot & \cdot & \cdot & \cdot & \cdot & \cdot & \cdot & \cdot & \cdot & \cdot \\
                -0.480 & -2.000 & -0.206 & \cdot & -0.206 & \cdot & -0.206 & \cdot & -0.088 & \cdot & -0.206 & \cdot & -0.088 & \cdot \\
                \cdot & \cdot & -2.000 & 2.000 & \cdot & \cdot & \cdot & \cdot & \cdot & \cdot & \cdot & \cdot & \cdot & \cdot \\
                \cdot & \cdot & -0.686 & -2.000 & \cdot & \cdot & \cdot & \cdot & -0.294 & \cdot & \cdot & \cdot & \cdot & \cdot \\
                \cdot & \cdot & \cdot & \cdot & -2.000 & 2.000 & \cdot & \cdot & \cdot & \cdot & \cdot & \cdot & \cdot & \cdot \\
                \cdot & \cdot & \cdot & \cdot & -0.686 & -2.000 & \cdot & \cdot & \cdot & \cdot & \cdot & \cdot & -0.294 & \cdot \\
                \cdot & \cdot & \cdot & \cdot & \cdot & \cdot & -2.000 & 2.000 & \cdot & \cdot & \cdot & \cdot & \cdot & \cdot \\
                \cdot & \cdot & \cdot & \cdot & \cdot & \cdot & -0.686 & -2.000 & -0.294 & \cdot & \cdot & \cdot & \cdot & \cdot \\
                \cdot & \cdot & \cdot & \cdot & \cdot & \cdot & \cdot & \cdot & -2.000 & 2.000 & \cdot & \cdot & \cdot & \cdot \\
                \cdot & \cdot & \cdot & \cdot & \cdot & \cdot & \cdot & \cdot & -0.980 & -2.000 & \cdot & \cdot & \cdot & \cdot \\
                \cdot & \cdot & \cdot & \cdot & \cdot & \cdot & \cdot & \cdot & \cdot & \cdot & -2.000 & 2.000 & \cdot & \cdot \\
                \cdot & \cdot & \cdot & \cdot & \cdot & \cdot & \cdot & \cdot & \cdot & \cdot & -0.686 & -2.000 & -0.294 & \cdot \\
                \cdot & \cdot & \cdot & \cdot & \cdot & \cdot & \cdot & \cdot & \cdot & \cdot & \cdot & \cdot & -2.000 & 2.000 \\
                \cdot & \cdot & \cdot & \cdot & \cdot & \cdot & \cdot & \cdot & \cdot & \cdot & \cdot & \cdot & -0.980 & -2.000 \\
                \end{smallmatrix}
                }
            \end{bmatrix}, \nonumber
        \end{gather}
    where zero values are replaced by the centered dots in the above matrix.

    Using the previously explained results in Step 4 of Section~\ref{sec:modeling_analysis_lifetime}, the probability density function and the $p^\mathrm{th}$ moment of the TTF can be obtained as follows:
    \begin{align*}
        f(z)&= \pmb{\alpha}_Z \exp\left({\mathbf{T}_Z z}\right) (-\mathbf{T}_Z\mathbf{e})\quad\mathrm{for}\ z\ge 0\quad\mathrm{and} \\
        \mathbb{E}[Z^p]&= \int_0^\infty z^p f(z) \mathrm{d}z=p ! \pmb{\upalpha}_Z\left(-\mathbf{T}_Z^{-1}\right)^{p}\mathbf{e}\quad\mathrm{for}\ p=1,2,\dots, 
    \end{align*}
    where the shapes of the $f(z)$ and $\mathbb{E}[Z^p]$ for the target system are depicted as the plots in Fig.~\ref{fig:example_pdf_moment}:
    \begin{figure}[H]
	\centering
	\subfloat[][pdf of TTF]
	{
		\centering\resizebox{0.48\textwidth}{!}{\includegraphics{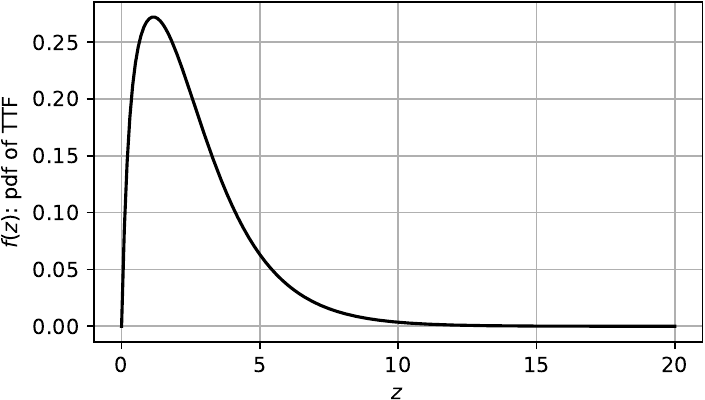}}
		\label{fig:example_pdf}		
	}
	~
	\subfloat[][$p^\mathrm{th}$ moment of TTF ($p=1,2,3,4$)]
	{
		\centering\resizebox{0.48\textwidth}{!}{\includegraphics{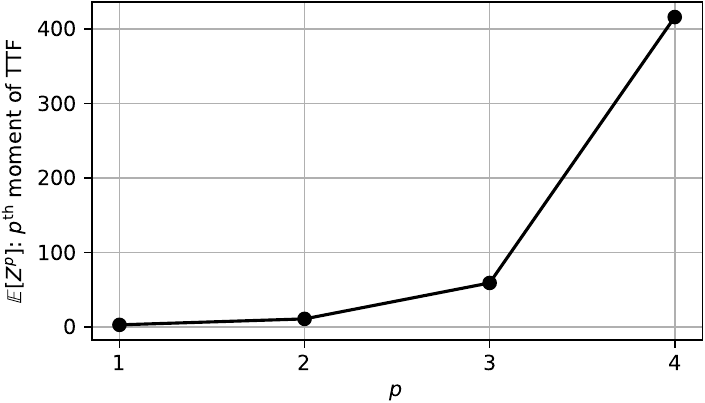}}
		\label{fig:example_moment}		
	}
	\caption{Shapes of pdf and $p^\mathrm{th}$ moment of TTF for the target system with its system parameters $k=2,n=4,r=0.7$, BC3, and $Erlang(\alpha=2,\lambda=2)$}
	\label{fig:example_pdf_moment}
    \end{figure}

    % This section has presented a detailed numerical example illustrating the proposed method explained in Section~\ref{sec:modeling_analysis_lifetime}. The next section expands on this by offering extensive numerical studies on the sensitivities of the system lifetimes from both discrete- and continuous-time perspectives.

    \section{Numerical Results} \label{sec:num_result} 
    This section presents the results of numerical experiments aimed at analyzing system lifetimes. While Section~\ref{sec:descriptive_case} presented a comprehensive analysis of a specific system configuration, this section broadens the investigation to encompass a wider range of system settings. The focus is on examining the behavior of the SNTF and the TTF, including the distribution characteristics and statistical moments. Furthermore, sensitivity analyses are conducted to gain deeper insights into how critical parameters affect system reliability. These analyses aim to reveal underlying patterns and insights, enhancing comprehension of system lifetime dynamics and contributing to improved reliability modeling.    
    
    As discussed in the previous sections, the uncertainties in SNTF and TTF of the CknGB system are characterized using discrete and continuous phase-type distributions, respectively. As the number of units $n$ in the system increases, the size of the transition probability matrix associated with these phase-type distributions grows exponentially. This rapid expansion makes matrix operations—such as matrix exponentiation, inversion, and power computations in Eq.~\eqref{eq:pmf} and Eq.~\eqref{eq:pdf}—computationally intractable.

    To mitigate these challenges, we employ the computationally efficient technique introduced in Section~\ref{sec:modeling_analysis_lifetime}, supplemented by CUDA libraries \cite{CUDA}, enabling high-performance GPU computing for matrix operations. Empirical results indicate that computations remain fully tractable for systems with up to $n = 10$. However, for $n = 12$ or $n = 14$, certain configurations present significantly increased computational burden. Beyond $n = 16$, constraints on computing resources make it impractical to analyze most configurations. Consequently, the numerical results presented in this section are restricted to systems with $n \leq 14$. For the readers' reference, all computational codes are implemented in Julia programming language~\cite{Bezanson2017} and executed on an Intel(R) Core(TM) i7-14700 2.10 GHz processor with 32GB of RAM and an NVIDIA GeForce RTX 4070 GPU.

    \subsection{Analyses of SNTFs} \label{subsec:analysis_SNTF}
    We begin by analyzing the discrete-time system lifetime, or SNTF. Section~\ref{subsubsec:shapes_SNTF} examines the distributional shapes of SNTF under different system parameters. Section~\ref{subsubsec:sensitivity_SNTF} investigates the influence of balance conditions on the mean SNTF, or MSNTF. Section~\ref{subsubsec:mean_SNTF} extends the sensitivity analysis to the system size, exploring how the MSNTF varies across different $(n,k)$ pairs.
    
    \subsubsection{Shapes of distributions} \label{subsubsec:shapes_SNTF}
    As discussed in Section~\ref{sec:modeling_analysis_lifetime}, the pmf of the SNTF depends on the system parameters $n$, $k$, $r$, and the balance condition BC. Among these, $n$, $k$, and BC can be considered intrinsic system parameters, typically determined by the physical characteristics and user preferences. For example, in the UAV system described in Section~\ref{sec:sys_desc}, the total number of units ($n$), the minimum required number of operating units ($k$), and the balance condition (BC) are specified based on physical design choices and operational requirements. In contrast, unit reliability ($r$) can be viewed as a modeling parameter, which may be estimated using historical failure data associated with external shocks. Given this distinction, we first analyze how the pmf of the SNTF varies with changes in $r$, while keeping $n$, $k$, and BC fixed.

    \begin{figure}[H]
    	\centering
    	\subfloat[][$k=4$, $n=12$, BC3]
    	{
    		\centering\resizebox{0.31\textwidth}{!}{\includegraphics{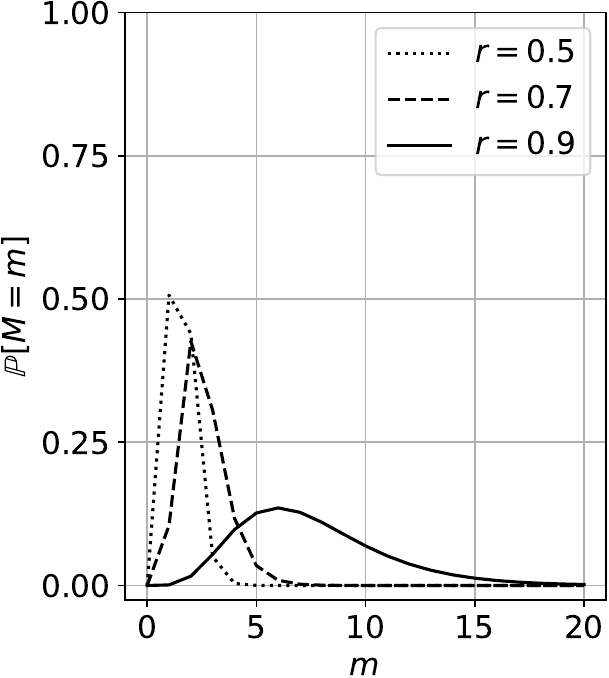}}
    		\label{subfig:pmf_of_SNTF_a}		
    	}
    	~
    	\subfloat[][$k=6$, $n=12$, BC3]
    	{
    		\centering\resizebox{0.31\textwidth}{!}{\includegraphics{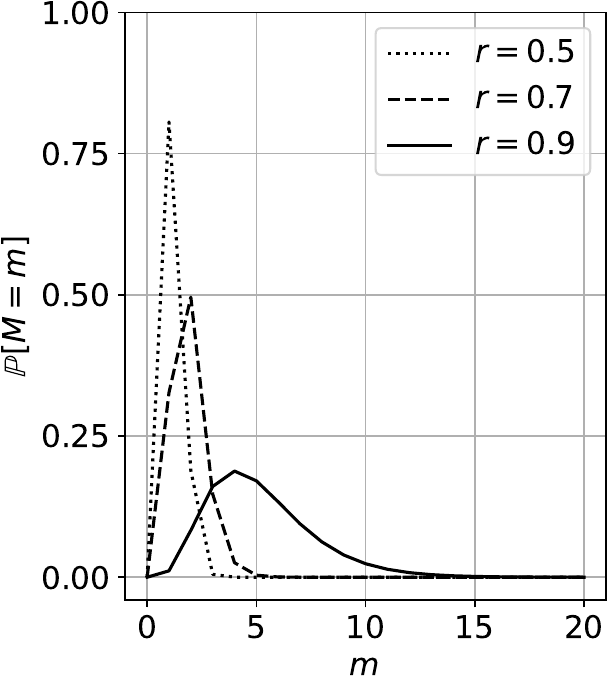}}
    		\label{subfig:pmf_of_SNTF_b}	
    	}
    	~
    	\subfloat[][$k=8$, $n=12$, BC3]
    	{
    		\centering\resizebox{0.31\textwidth}{!}{\includegraphics{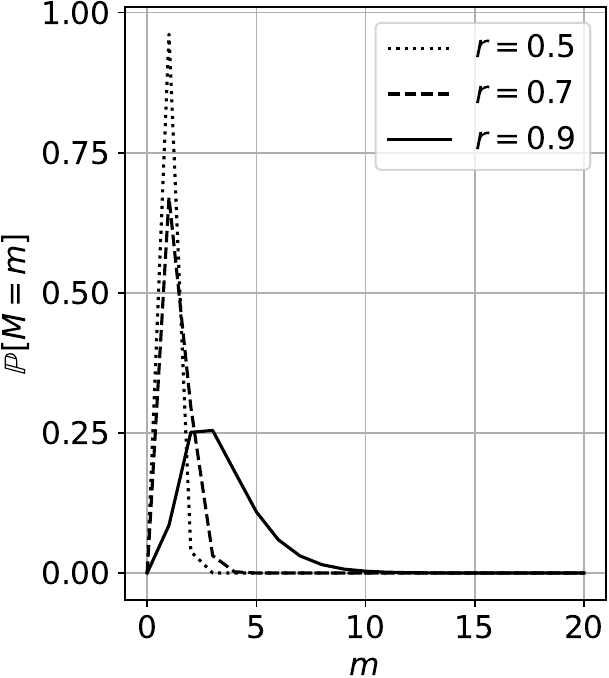}}
    		\label{subfig:pmf_of_SNTF_c}
    	}
    	\caption{Shapes of the pmfs of the SNTFs for circular $(4,6,8)$-out-of-$12$: G balanced systems where $r=0.5,0.7,0.9$ and BC3 is applied as a balance condition}
    	\label{fig:pmf_of_SNTF}
    \end{figure}

    Fig.~\ref{fig:pmf_of_SNTF} illustrates the shapes of the SNTF distributions for $k$-out-of-12 systems under BC3, with $k=4,6,12$. Each subplot shows the pmf curves for three different reliability values, $r=0.5,0.7,0.9$. A common trend observed across all three subplots is that as unit reliability $r$ increases, the probability mass distribution shifts to the right. This indicates that as each unit has a higher probability of surviving a shock, the system as a whole is more likely to remain operational for a longer duration. In other words, the balance condition remains satisfied for an extended period. This result aligns with the intuitive expectation that, given the same number of external shocks, a system with higher unit reliability has a greater chance of surviving.

    Comparing across the subplots, we observe that when $k$ is small, the pmf of the SNTF $M$ is more widely dispersed in Fig.~\ref{subfig:pmf_of_SNTF_a}. As $k$ increases, the probability mass becomes more concentrated as in Figs.~\ref{subfig:pmf_of_SNTF_b}~and~\ref{subfig:pmf_of_SNTF_c}. For the same $r$ value, increasing $k$ results in a lower mode of $M$ with a more pronounced peak in its probability. This phenomenon arises because a higher required number of operating units $k$ reduces the number of successful operational scenarios, making system failure more likely to occur at an earlier stage. Consequently, system failures tend to occur after fewer external shocks and with greater certainty, leading to a more concentrated distribution. On the other hand, for smaller $k$, the greater variety of successful operational scenarios results in a more widely dispersed probability distribution.
    
    \subsubsection{Mean SNTF values under different balance conditions} \label{subsubsec:sensitivity_SNTF}

    Fig.~\ref{fig:mean_SNTF} presents the MSNTF values for circular $k$-out-of-$12$ systems under different balance conditions. Figs.~\ref{fig:mean_SNTF_a}-\ref{fig:mean_SNTF_c} display the results for $k=4,6,8$, respectively, with the x-axis representing unit reliability $r$ and the y-axis denoting the MSNTF $\mathbb{E}[M]$. Each plot includes three curves corresponding to the balance conditions BC1, BC2, and BC3, marked with cross-shaped, triangular, and circular markers respectively. 
    \begin{figure}[t!]
        \centering
    	\subfloat[][$k=4$, $n=12$]
    	{
    		\centering\resizebox{0.31\textwidth}{!}{\includegraphics{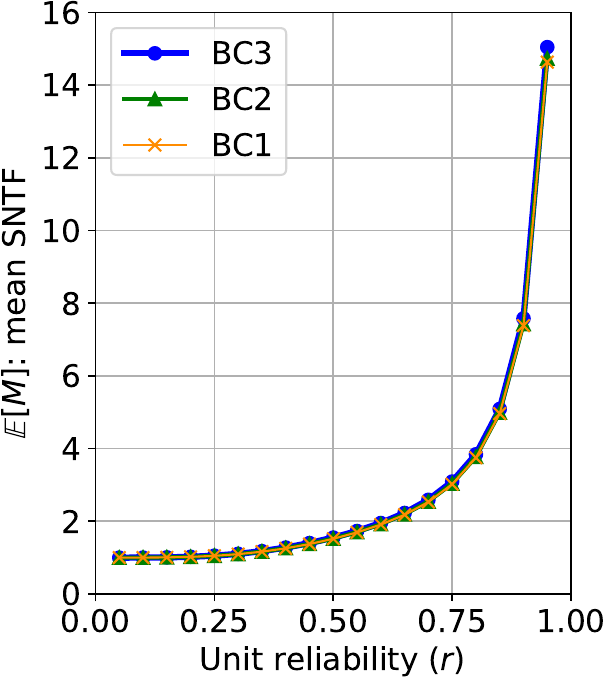}}
    		\label{fig:mean_SNTF_a}		
    	}
    	~
    	\subfloat[][$k=6$, $n=12$]
    	{
    		\centering\resizebox{0.31\textwidth}{!}{\includegraphics{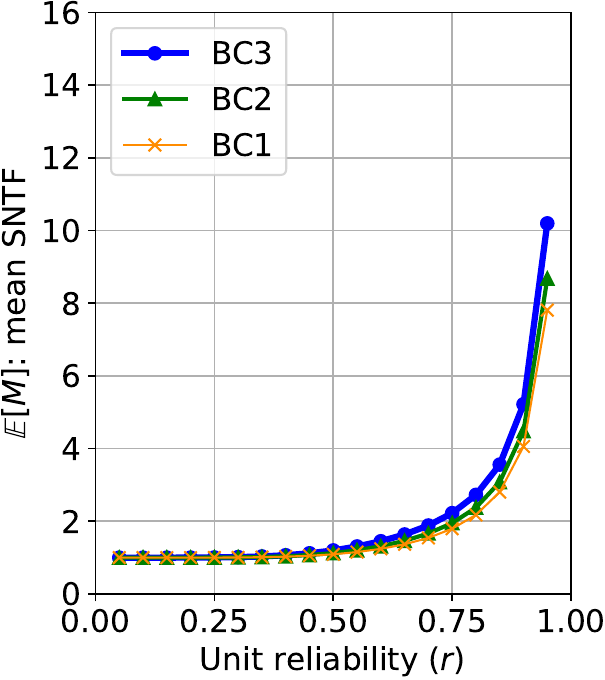}}
    		\label{fig:mean_SNTF_b}		
    	}
    	~
    	\subfloat[][$k=8$, $n=12$]
    	{
    		\centering\resizebox{0.31\textwidth}{!}{\includegraphics{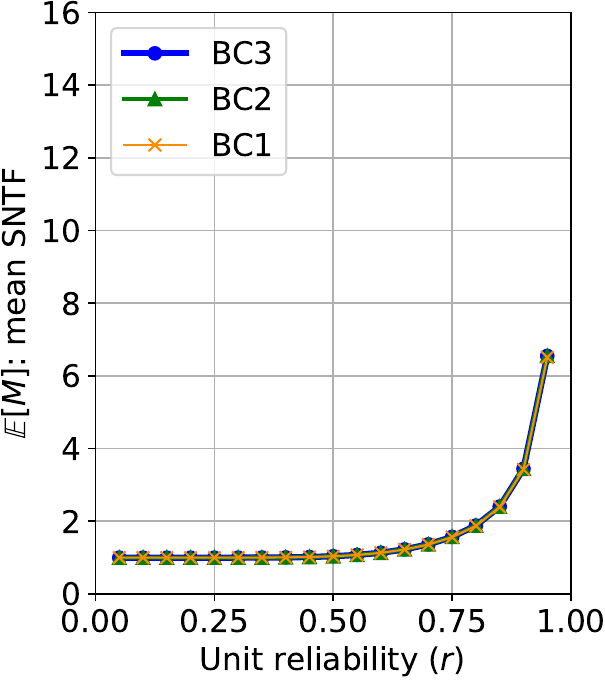}}
    		\label{fig:mean_SNTF_c}		
    	}
        \caption{Comparison between the MSNTF values under different balance conditions for circular $(4,6,8)$-out-of-$12$: G balanced systems}
        \label{fig:mean_SNTF}
    \end{figure} 

    First, let us examine how the MSNTF values change under a fixed balance condition. An anticipatory, but important observation from all the curves in Figs.~\ref{fig:mean_SNTF_a}-\ref{fig:mean_SNTF_c} is that, with $k$ and $n$ fixed, the MSNTF increases exponentially as $r$ increases. This trend aligns with our previous observations in Fig.~\ref{fig:pmf_of_SNTF}, where an increase in $r$ rapidly shifts the SNTF distribution to the right, leading to a larger mean value. Meanwhile, for fixed $n$ and $r$, the MSNTF generally decreases as $k$ increases. This is because a larger $k$ reduces the likelihood of the system satisfying the operational condition, thereby making it more susceptible to sequential shocks. 

    We then examine the impact of different balance conditions on the system's MSNTF. The most striking observation is that, when all other system parameters remain the same, applying different balance conditions definitely results in a difference in MSNTF, as shown in Fig.~\ref{fig:mean_SNTF_b}. This phenomenon occurs because, as explained through Section~\ref{subsec:ro} and Section~\ref{subsec:sys_rel}, the system reliability of the CknGB system is determined by the number of minimum tie-sets, denoted by $|\mathcal{T}|$. Moreover, although the differences are not dramatically pronounced, there is a consistent tendency where MSNTF follows the order BC3 $\ge$ BC2 $\ge$ BC1. This observation aligns precisely with the findings of Cho~\textit{et al.}~\cite{cho2023}, which demonstrated that for the same CknGB system, the number of minimum tie-sets generated by the applied balance condition was highest for BC3, followed by BC2, and then BC1.

    \subsubsection{Overall sensitivity analysis on the MSTNF} \label{subsubsec:mean_SNTF}
    Next, we conduct an overall sensitivity analysis of the MSNTF values. Fig.~\ref{fig:reliability_comparison_type_3} visualizes the MSNTF values for circular $k$-out-of-$n$: G systems across varying $k$ and $n$, under different balance conditions and unit reliability levels. Each subplot corresponds to a specific balance condition (BC1, BC2, or BC3) and reliability level ($r=0.5, 0.7, 0.9$), as labeled. Within each surface plot, the x- and y-axes represent $k$ and $n$, respectively, with the range restricted to $2 \leq k \leq n-1$. The z-axis indicates the MSNTF $\mathbb{E}[M]$. These surface plots provide a visual overview of how MSNTF varies across plausible combinations of $(n,k)$ pairs.
    \begin{figure}[H]
    	\centering
    	\subfloat[][BC1, $r=0.5$]
    	{
    		\centering\resizebox{0.31\textwidth}{!}{\includegraphics{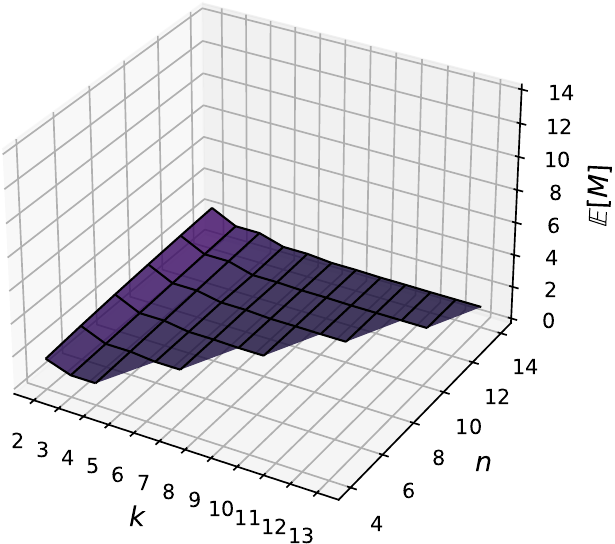}}
    	}
    	~
    	\subfloat[][BC1, $r=0.7$]
    	{
    		\centering\resizebox{0.31\textwidth}{!}{\includegraphics{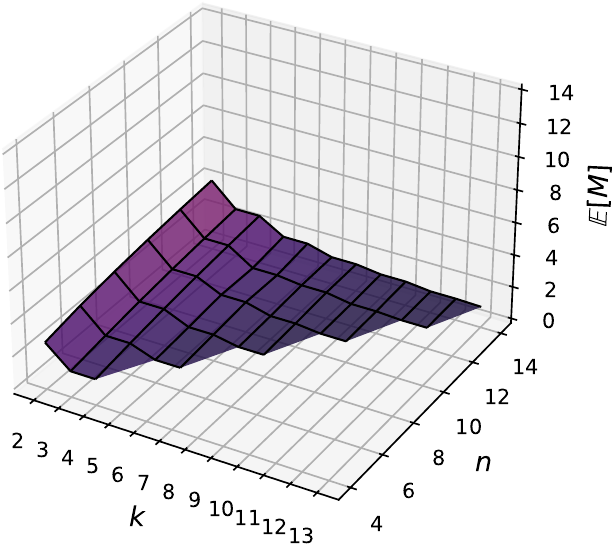}}
    	}
    	~
    	\subfloat[][BC1, $r=0.9$]
    	{
    		\centering\resizebox{0.31\textwidth}{!}{\includegraphics{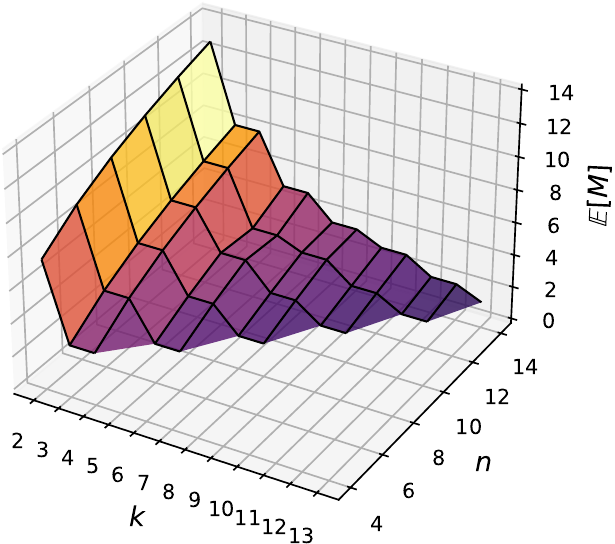}}
    	}
    	
    	\subfloat[][BC2, $r=0.5$]
    	{
    		\centering\resizebox{0.31\textwidth}{!}{\includegraphics{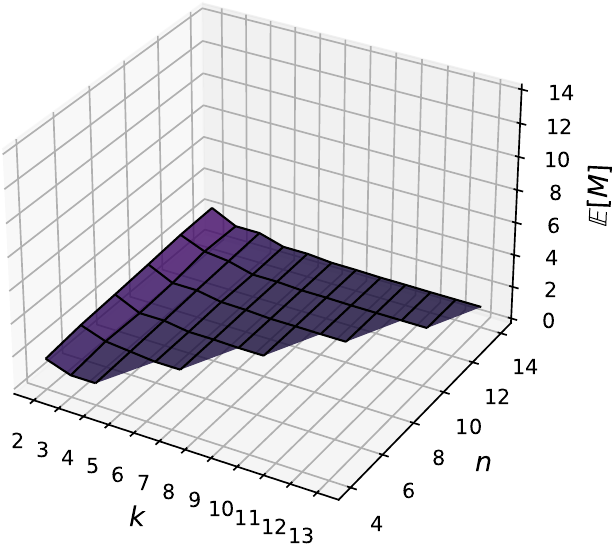}}
    	}
    	~
    	\subfloat[][BC2, $r=0.7$]
    	{
    		\centering\resizebox{0.31\textwidth}{!}{\includegraphics{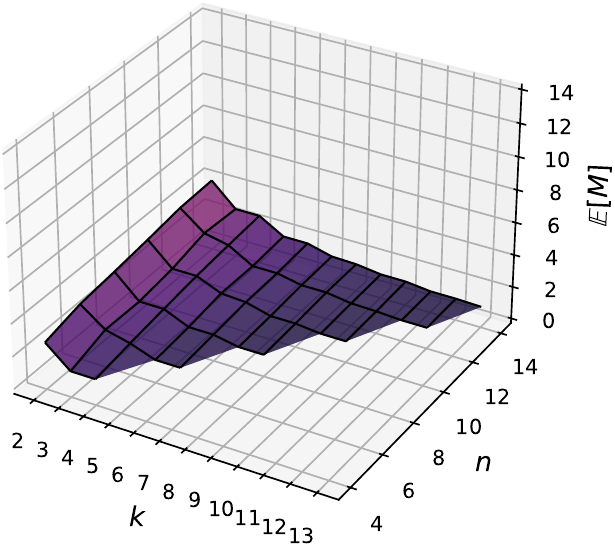}}
    	}
    	~
    	\subfloat[][BC2, $r=0.9$]
    	{
    		\centering\resizebox{0.31\textwidth}{!}{\includegraphics{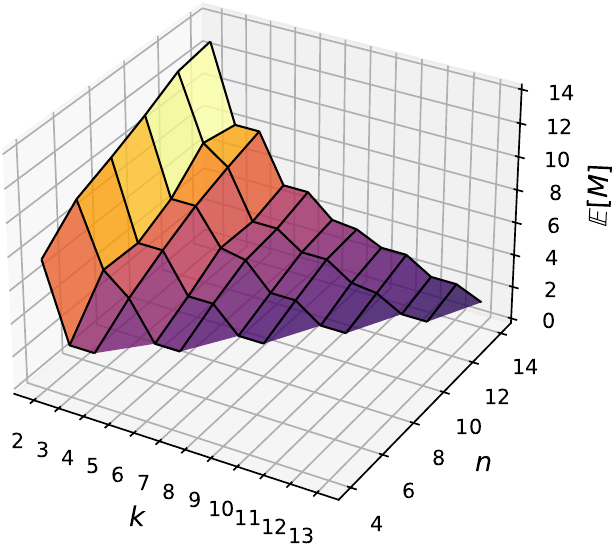}}
    	}
    	
    	\subfloat[][BC3, $r=0.5$]
    	{
    		\centering\resizebox{0.31\textwidth}{!}{\includegraphics{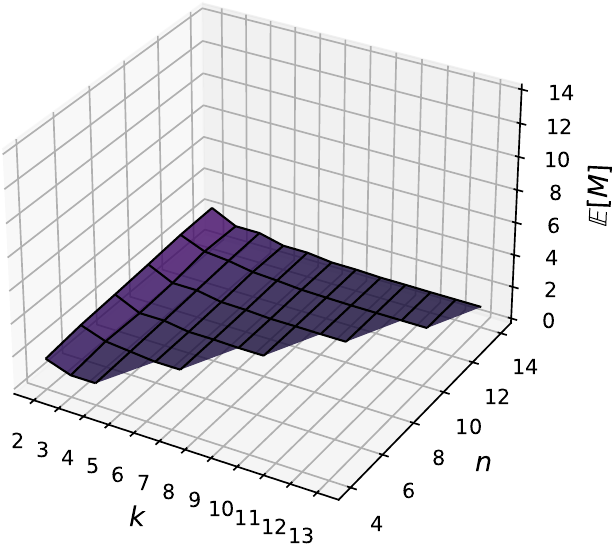}}
    	}
    	~
    	\subfloat[][BC3, $r=0.7$]
    	{
    		\centering\resizebox{0.31\textwidth}{!}{\includegraphics{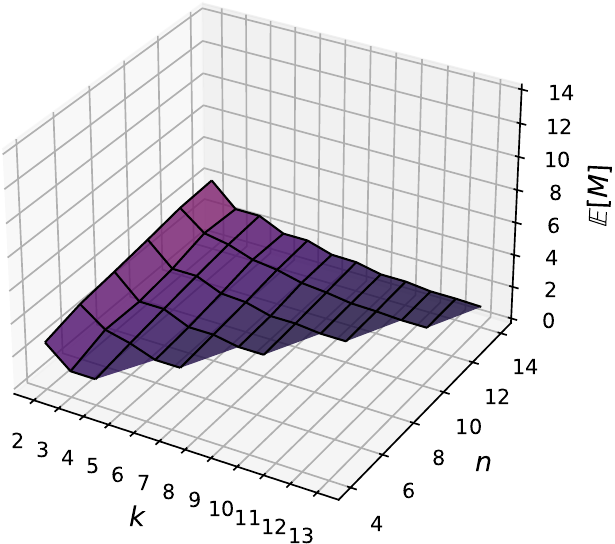}}
    	}
    	~
    	\subfloat[][BC3, $r=0.9$]
    	{
    		\centering\resizebox{0.31\textwidth}{!}{\includegraphics{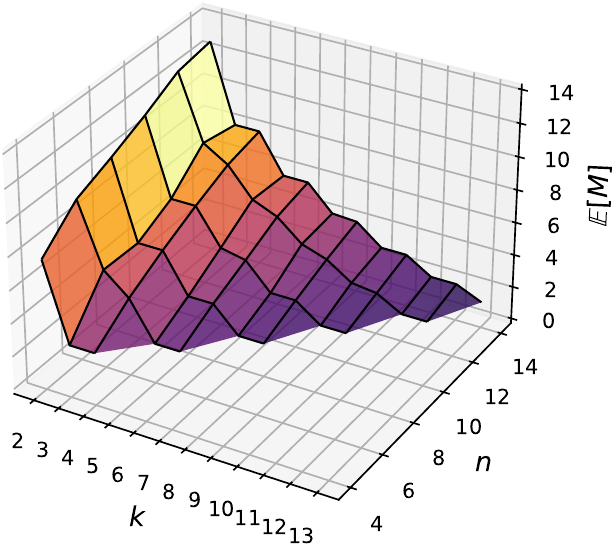}}
    	}
    	\caption{Comparison between the MSNTF values for the systems with varying $k$ and $n$ under $r=0.5,0.7,0.9$ and each balance condition}
    	\label{fig:reliability_comparison_type_3}
    \end{figure}

    As expected, the observed trends align with our earlier findings. Within a specific subplot (i.e., for a fixed balance condition and unit reliability), the MSNTF generally increases as $k$ decreases under a fixed $n$. Examining the subplots across a row (i.e., for a fixed balance condition and $(n,k)$ pair), the MSNTF also increases as $r$ becomes higher. When comparing across balance conditions, the MSNTF values for BC1 and BC2 are generally similar, while BC3 consistently produces higher values due to its less restrictive failure criteria. This behavior is also linked to differences in the number of minimum tie-sets under the respective balance conditions.

    \subsection{Analyses of TTFs} \label{subsec:analysis_TTF}
    We now shift our focus to the continuous-time system lifetime, or TTF. As discussed in Section~\ref{sec:modeling_analysis_lifetime}, if the inter-arrival time between shocks, denoted by a random variable $Y$, follows a phase-type distribution, the derivation results of SNTF can be further extended to analyze TTF, which can also be characterized as a phase-type random variable. Before analyzing the TTF itself, we first describe the underlying inter-shock time distributions.

    We consider three different base inter-shock time distributions with distinct characteristics: Erlang (ER), Exponential (EXP), and Hyperexponential (HE). Although these distributions differ in their properties, they all belong to the family of phase-type distributions. Table~\ref{tab:base_inter_shock_time_distributions} provides the details of these distributions, including their respective parameters. For comparative purposes regarding distributional variation, all three distributions are configured to have the same mean of $1$ while differing in their squared coefficient of variation (SCV), with values of $0.5, 1$, and $2$. The pdfs of these distributions, denoted by $f(y)$, are depicted in Fig.~\ref{fig:figure_of_inter_arrival_time_distributions}, illustrating their distinct distributional shapes.

    \begin{table}[H]
        \centering
        \caption{Base inter-shock time distributions}
        \resizebox{\textwidth}{!}{%
        \begin{tabular}{|c|cc|c|c|c|}
        \hline
        \multirow{2}{*}{\textbf{label}} &
          \multicolumn{2}{c|}{\textbf{$PH_c$ distribution}} &
          \multirow{2}{*}{\textbf{mean}} &
          \multirow{2}{*}{\textbf{SCV}} &
          \multirow{2}{*}{\textbf{Corresponding common distribution}} \\ \cline{2-3}
          & \multicolumn{1}{c|}{\textbf{initial state vector}} & \textbf{subgenerator matrix} &   &     &                         \\ \hline
        ER & \multicolumn{1}{c|}{$\pmb{\upalpha}_c=[1,0]$}       & $\mathrm{\textbf{T}}_c=\begin{bmatrix} -2 & 2 \\ 0 & -2 \end{bmatrix}$           & 1 & 0.5 & $Erlang(\alpha=2,\lambda=2)$ \\[0.3cm] \hline
        EXP & \multicolumn{1}{c|}{$\pmb{\upalpha}_c=[1]$}       & $\mathrm{\textbf{T}}_c=\begin{bmatrix} -1 \end{bmatrix}$           & 1 & 1   & $Exponential(\lambda=1)$        \\[0.3cm] \hline
        HE &
          \multicolumn{1}{c|}{$\pmb{\upalpha}_c=[1/2,1/2]$} &
          $\mathrm{\textbf{T}}_c=\begin{bmatrix} -2/(2-\sqrt{2}) & 0 \\ 0 & -2/(2+\sqrt{2}) \end{bmatrix}$ &
          1 &
          2 &
          $Hyperexponential\left(\pmb{\upalpha}=\begin{bmatrix}
    1/2 \\
    1/2
\end{bmatrix}, \pmb{\uplambda}=\begin{bmatrix}
    2/(2-\sqrt{2}) \\
    2/(2+\sqrt{2})
\end{bmatrix}\right)$ \\[0.3cm] \hline
        \end{tabular}%
        }
        \label{tab:base_inter_shock_time_distributions}
    \end{table}

    \begin{figure}[H]
        \centering
        \includegraphics[width = \textwidth]{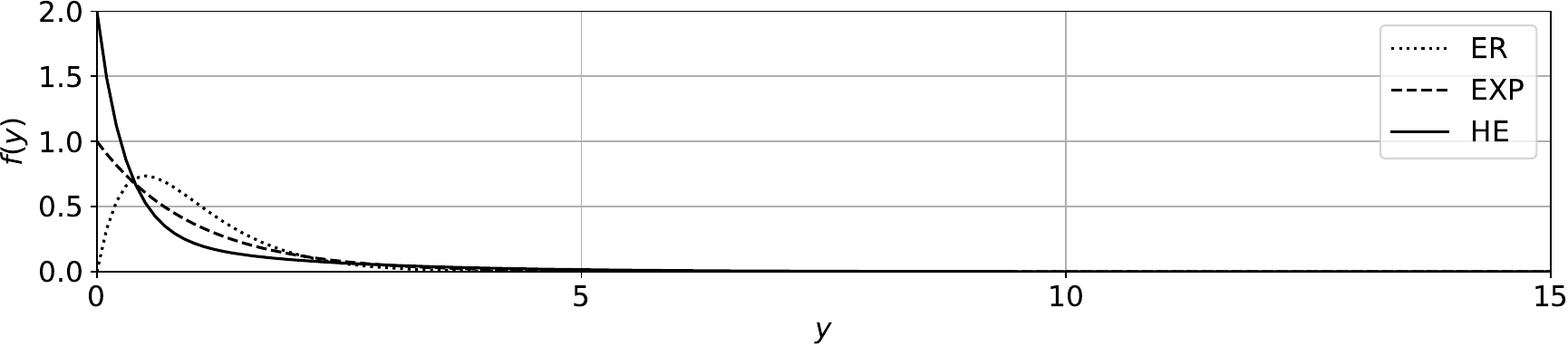}
        \caption{Shapes of the probability density functions of base inter-shock time distributions}
        \label{fig:figure_of_inter_arrival_time_distributions}
    \end{figure} 
    
    \subsubsection{Shapes of distributions under different inter-shock time distributions} \label{subsubsec:shapes_of_TTF}
    The subplots in Fig.~\ref{fig:reliability_comparison_type_4} depict the resulting TTF distributions for $k$-out-of-$12$ systems with $k=4,6,8$ and $r=0.5, 0.7, 0.9$, under the application of BC3 for illustrative purpose. As the underlying base distributions share the same mean, all the TTFs also have the same mean, which is indicated by a purple-colored vertical line with such a value in each subplot. Each row and column of subplots iterates over $r$ and $k$ values, respectively.

    \begin{figure}[H]
    	\centering
    	\subfloat[][$r=0.5$, $4$-out-of-$12$]
    	{
    		\centering\resizebox{0.31\textwidth}{!}{\includegraphics{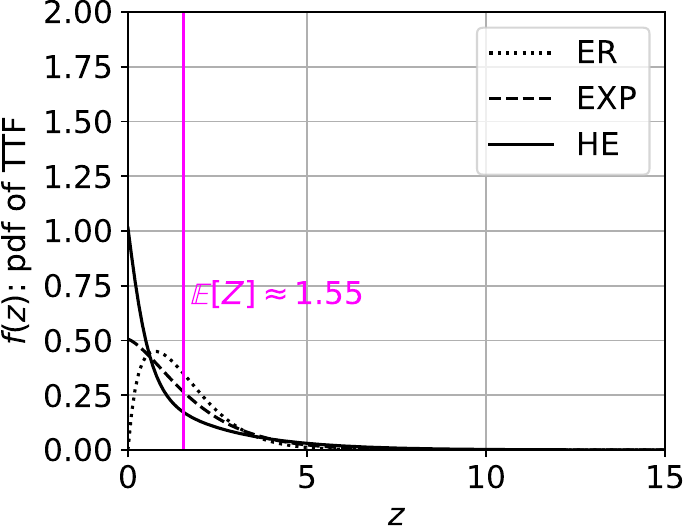}}
    		\label{fig:reliability_comparison_type_4_a}		
    	}
    	~
    	\subfloat[][$r=0.5$, $6$-out-of-$12$]
    	{
    		\centering\resizebox{0.31\textwidth}{!}{\includegraphics{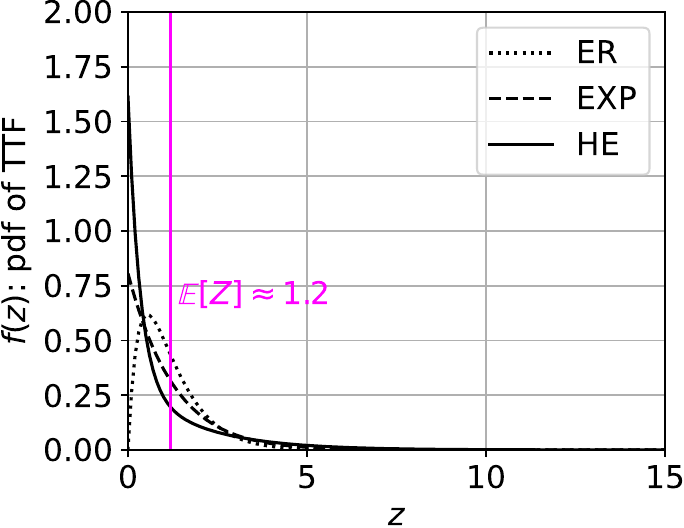}}
    		\label{fig:reliability_comparison_type_4_b}		
    	}
    	~
    	\subfloat[][$r=0.5$, $8$-out-of-$12$]
    	{
    		\centering\resizebox{0.31\textwidth}{!}{\includegraphics{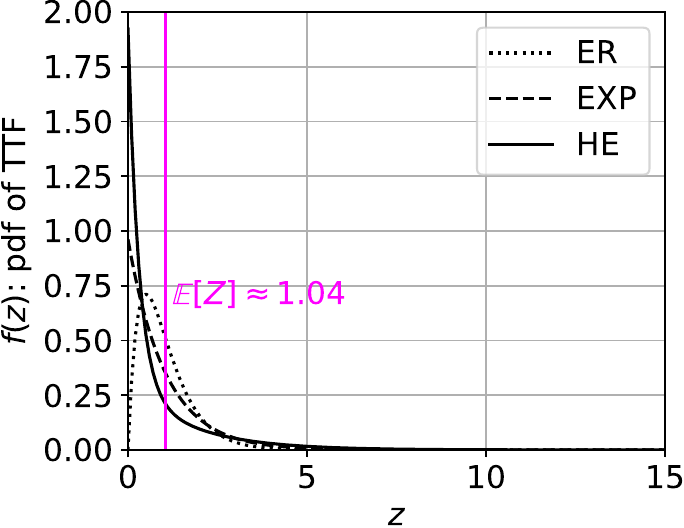}}
    		\label{fig:reliability_comparison_type_4_c}		
    	}
    	
    	\subfloat[][$r=0.7$, $4$-out-of-$12$]
    	{
    		\centering\resizebox{0.31\textwidth}{!}{\includegraphics{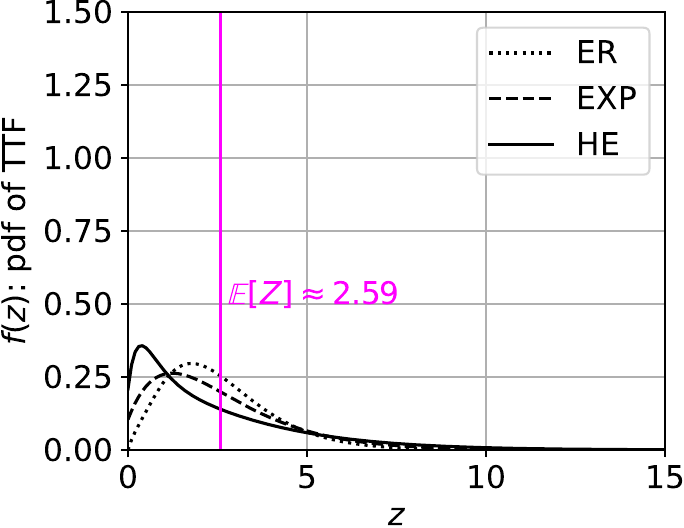}}
    		\label{fig:reliability_comparison_type_4_d}		
    	}
    	~
    	\subfloat[][$r=0.7$, $6$-out-of-$12$]
    	{
    		\centering\resizebox{0.31\textwidth}{!}{\includegraphics{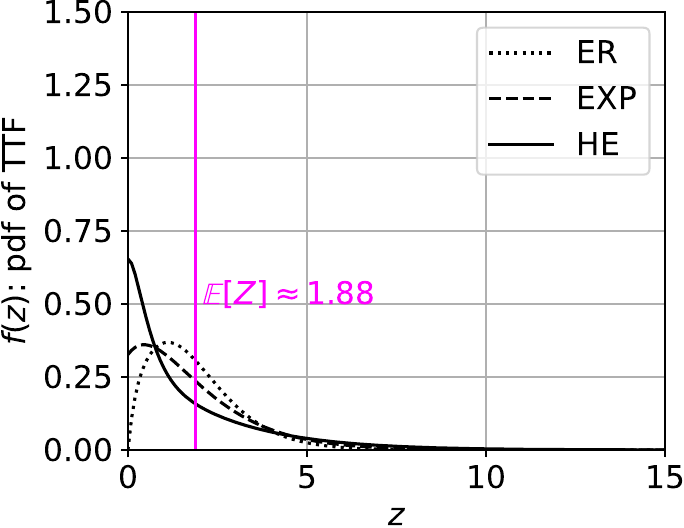}}
    		\label{fig:reliability_comparison_type_4_e}		
    	}
    	~
    	\subfloat[][$r=0.7$, $8$-out-of-$12$]
    	{
    		\centering\resizebox{0.31\textwidth}{!}{\includegraphics{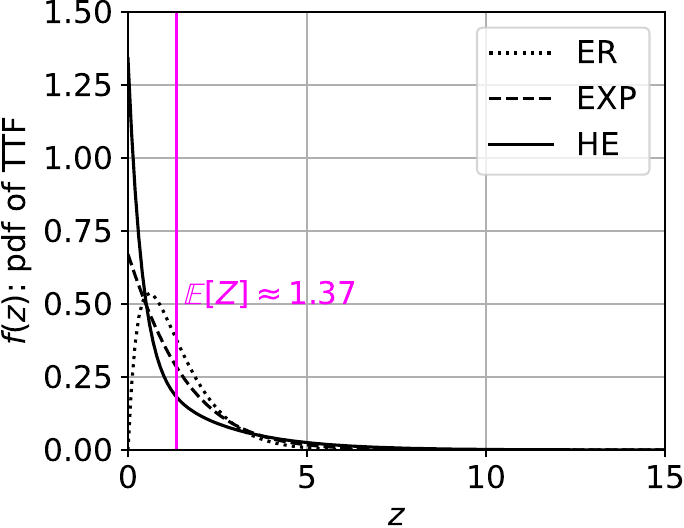}}
    		\label{fig:reliability_comparison_type_4_f}		
    	}
    	
    	\subfloat[][$r=0.9$, $4$-out-of-$12$]
    	{
    		\centering\resizebox{0.31\textwidth}{!}{\includegraphics{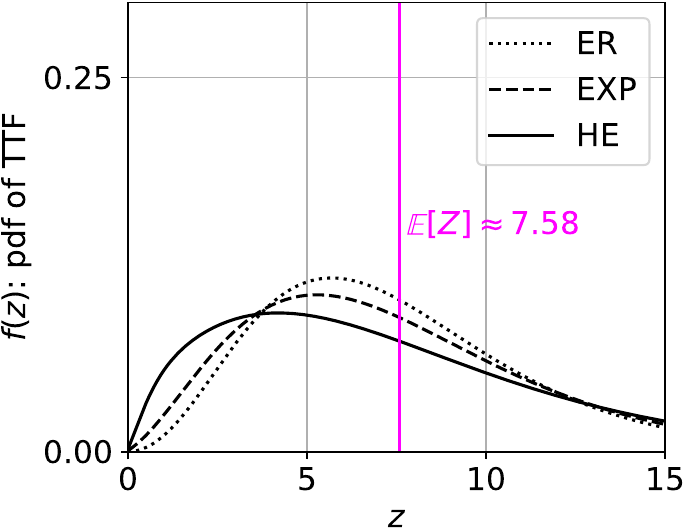}}
    		\label{fig:reliability_comparison_type_4_g}		
    	}
    	~
    	\subfloat[][$r=0.9$, $6$-out-of-$12$]
    	{
    		\centering\resizebox{0.31\textwidth}{!}{\includegraphics{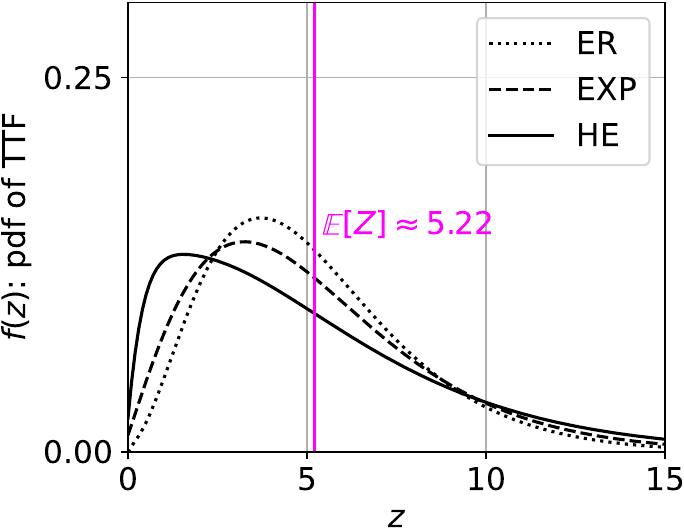}}
    		\label{fig:reliability_comparison_type_4_h}		
    	}
    	~
    	\subfloat[][$r=0.9$, $8$-out-of-$12$]
    	{
    		\centering\resizebox{0.31\textwidth}{!}{\includegraphics{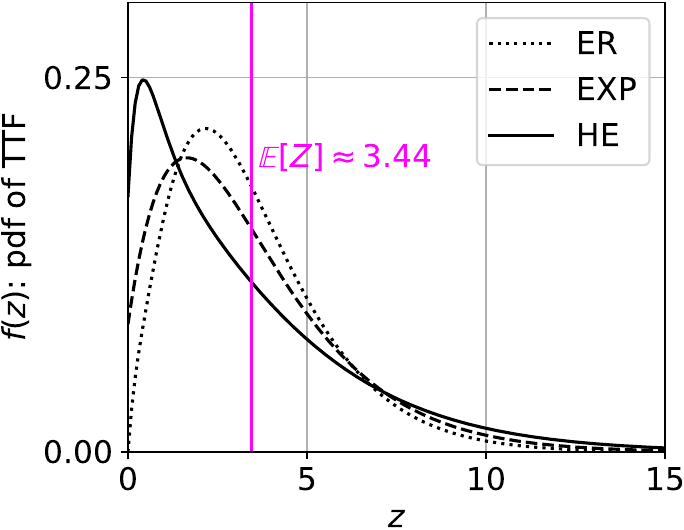}}
    		\label{fig:reliability_comparison_type_4_i}		
    	}
	
	   \caption{Comparison between the probability density functions of TTFs for circular $(4,6,8)$-out-of-$12$: balanced systems under BC3 and three different inter-shock time distributions where $r=0.5,0.7,0.9$ }
	   \label{fig:reliability_comparison_type_4}
    \end{figure}
     
    A consistent pattern is observed in the shape of TTF distributions across all system configurations. To be specific, the TTF distributions appear to inherit the density concentration patterns of their respective base distributions. As shown in Fig.~\ref{fig:figure_of_inter_arrival_time_distributions}, the base distributions---ER, EXP, and HE---exhibit distinct density concentration tendencies: ER peaks at the largest $z$ value, EXP is more evenly spread, and HE is primarily concentrated at smaller $z$ values. This behavior arises from the way in which the TTF is defined as a random sum of inter-shock times, $Z = \sum_{m=1}^M Y_m$, where the number of summands $M$, which is the SNTF. Since each inter-shock time $Y_m$ follows the base inter-shock time distribution and $M$ is consistent across all cases within a given system configuration, the pdf of $Z$ naturally reflects the density characteristics of its underlying base distribution.

    Moreover, the extent of this shift is influenced by the magnitude of system parameters. Higher unit reliability ($r$) leads to a more pronounced shift toward larger $z$ values, as the system remains operational for longer durations. Conversely, smaller $k$ values result in a weaker shift, as the system reaches the failed state more quickly. These phenomena are consistent with those observed in Section~\ref{subsec:analysis_SNTF}, where higher $r$ or smaller $k$ were shown to prolong or shorten system operational durations. This shared tendency between discrete-time (SNTF) and continuous-time (TTF) analyses further highlights the consistency of these reliability measures. 
    
    When it comes to the MTTF, denoted as $\mathbb{E}[Z]$, it increases as $r$ increases or $k$ decreases. One noticeable observation is that the degree of increase in $\mathbb{E}[Z]$ is more sensitive to changes in $r$ than in $k$. In other words, if we examine a specific row of the subplots, for example, the first row with $r=0.5$ in Figs.~\ref{fig:reliability_comparison_type_4_a},~\ref{fig:reliability_comparison_type_4_b}, and~\ref{fig:reliability_comparison_type_4_c}, the MTTF increases from $1.04$ to $1.2$ and then to $1.55$ as $k$ decreases from $8$ to $6$ and then to $4$. However, if we examine a specific column, for example, the first column with $k=4$ in Figs.~\ref{fig:reliability_comparison_type_4_a},~\ref{fig:reliability_comparison_type_4_d}, and~\ref{fig:reliability_comparison_type_4_g}, the MTTF increases to a much greater extent, varying from $1.55$ to $2.59$ and then to $7.58$ as $r$ increases from $0.5$ to $0.7$ and then to $0.9$. This observation suggests that the transition probability matrix $\bar{\mathbf{P}}_{\textrm{BC}}$ exhibits greater sensitivity to change in $r$ compared to the sensitivity of minimum tie-sets to changes in $k$. Thus, an increase in $r$ induces a more pronounced shift in the TTF distributions compared to a decrease in $k$. 

    Comparing different values of $k$ at fixed $r$, the relative increase in the MTTF is greater for higher values of $r$. In particular, for $r=0.5, 0.7, 0.9$, the MTTF at $k=4$ is approximately $1.49$, $1.89$, and $2.20$ times larger, respectively, than that at $k=8$. Similarly, for $k=8,6,4$, the MTTF at $r=0.9$ is approximately $3.31$, $4.35$, and $4.89$ times larger, respectively, than that at $r=0.5$. These results indicate that the effect of increasing $r$ is stronger at smaller $k$ values, and the effect of decreasing $k$ is stronger when $r$ is already high. This suggests that, for extending system operational duration, it is more effective to simultaneously increase $r$ and decrease $k$ rather than applying only one of these changes. 

    \subsubsection{Overall sensitivity analysis on the SCV values} \label{subsubsec:sensitivity_TTF}
    Previously, we discussed the shape of TTF distributions and analyzed their mean values. Now, we examine the squared coefficient of variation (SCV) of TTF, defined as $Var[Z]/(\mathbb{E}[Z])^2$. Similar to the sensitivity analysis in Section~\ref{subsubsec:sensitivity_SNTF}, we investigate how SCV values vary across plausible $(n,k)$ combinations, where $2 \leq k \leq n-1$. Fig.~\ref{fig:sensitivity_analysis_SCV} presents a heatmap of SCV values under a fixed unit reliability $r=0.9$. Each row corresponds to a different base inter-shock time distribution (ER, EXP, and HE), while each column represents a different balance condition (BC1, BC2, and BC3).
    \begin{figure}[H]
    	\centering
    	\subfloat[][ER, BC1]
    	{
    		\centering\resizebox{0.31\textwidth}{!}{\includegraphics{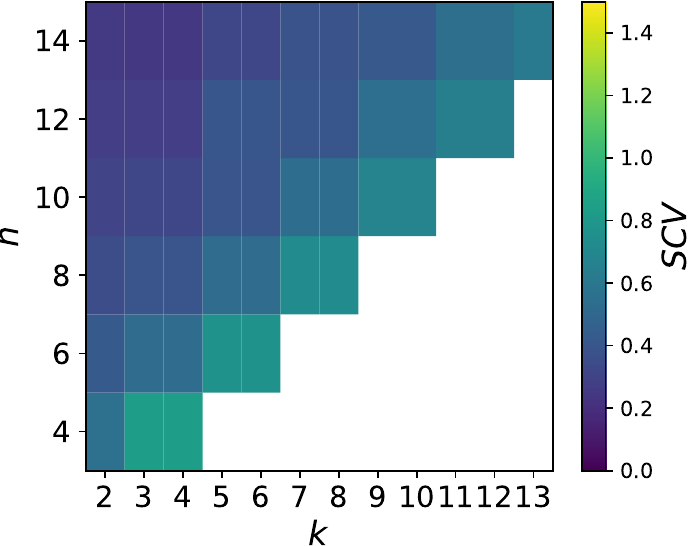}}
    	}
    	~
    	\subfloat[][ER, BC2]
    	{
            \centering\resizebox{0.31\textwidth}{!}{\includegraphics{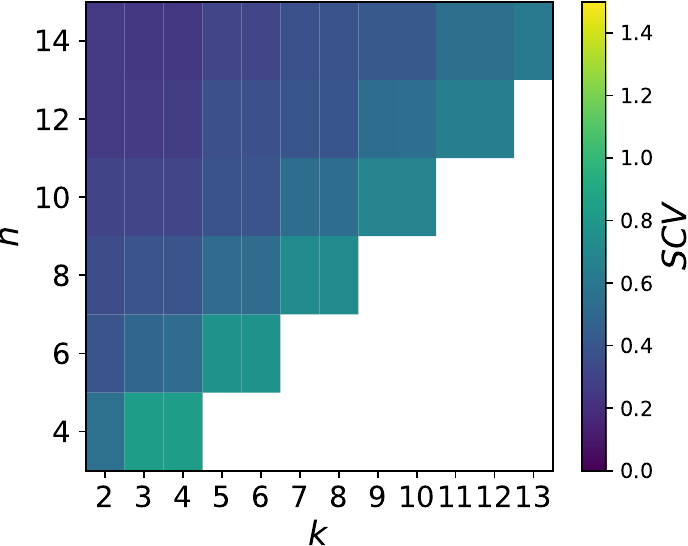}}
    	}
    	~
    	\subfloat[][ER, BC3]
    	{
            \centering\resizebox{0.31\textwidth}{!}{\includegraphics{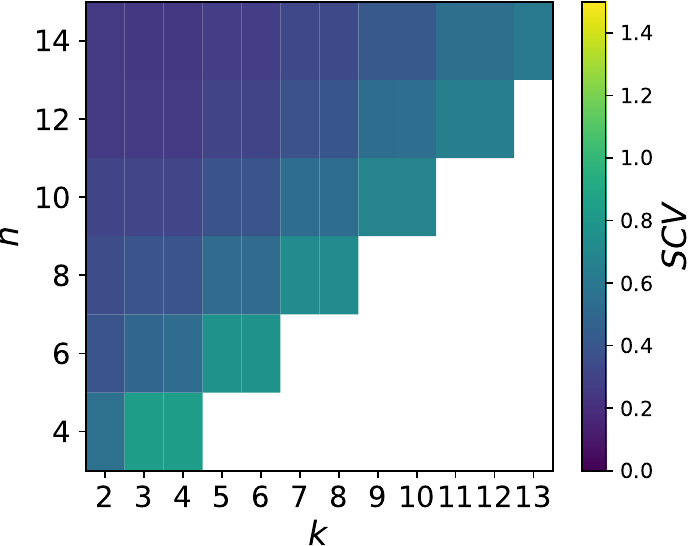}}
    	}
    	
    	\subfloat[][EXP, BC1]
    	{
            \centering\resizebox{0.31\textwidth}{!}{\includegraphics{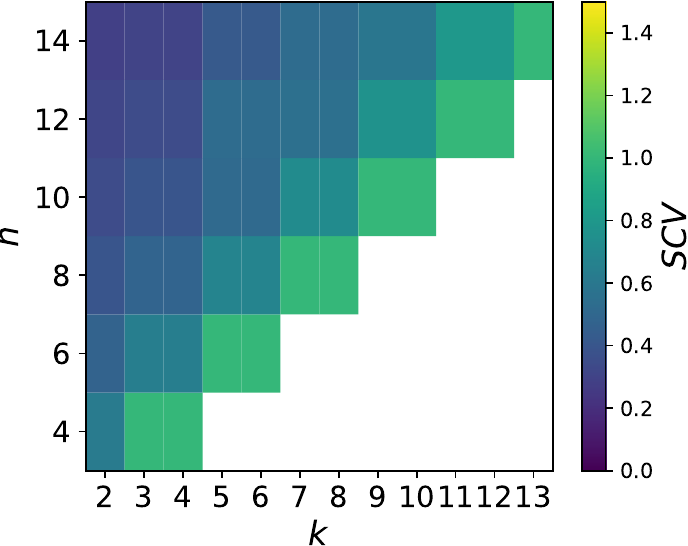}}
    	}
    	~
    	\subfloat[][EXP, BC2]
    	{
            \centering\resizebox{0.31\textwidth}{!}{\includegraphics{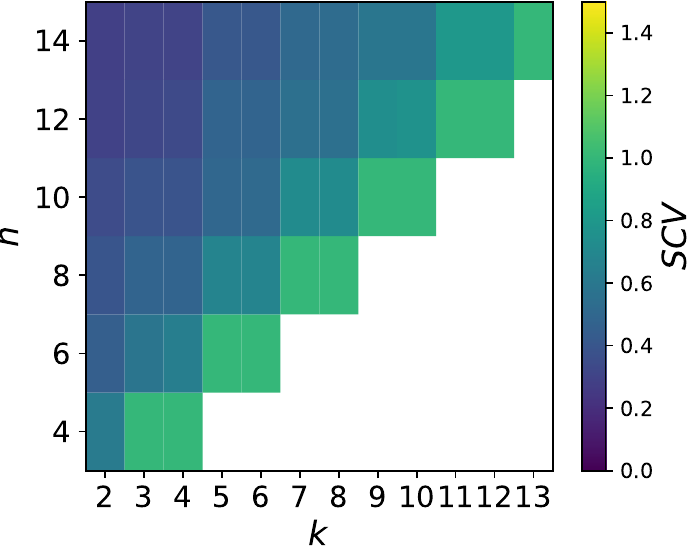}}
    	}
    	~
    	\subfloat[][EXP, BC3]
    	{
            \centering\resizebox{0.31\textwidth}{!}{\includegraphics{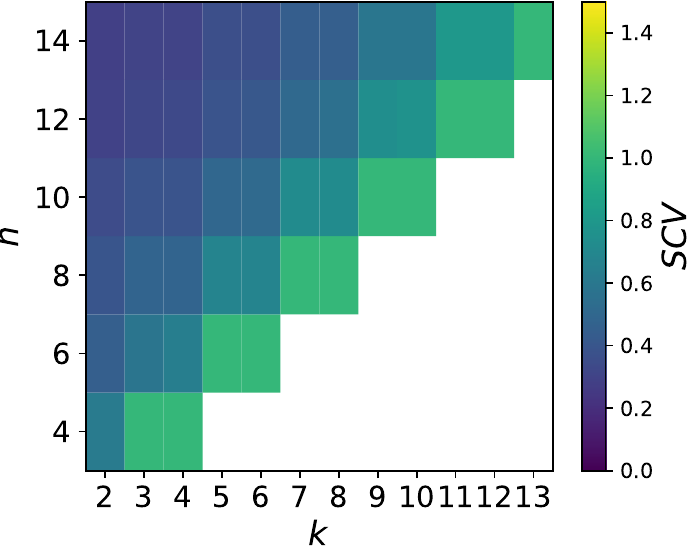}}
    	}
    	
    	\subfloat[][HE, BC1]
    	{
    	   \centering\resizebox{0.31\textwidth}{!}{\includegraphics{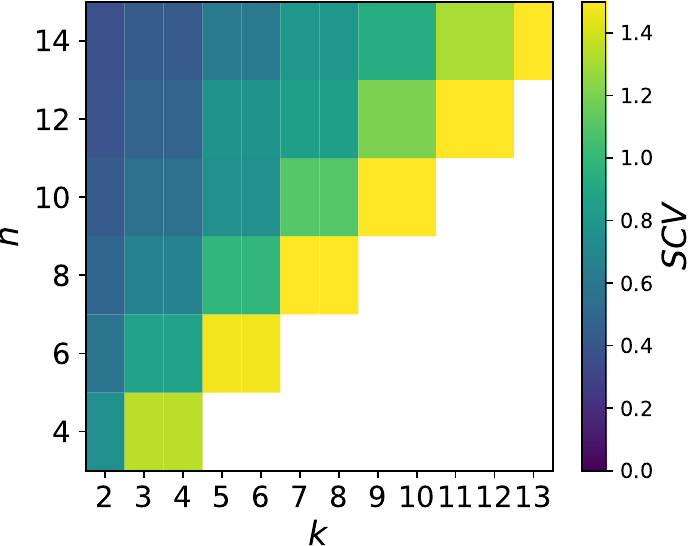}}
    	}
    	~
    	\subfloat[][HE, BC2]
    	{
    		\centering\resizebox{0.31\textwidth}{!}{\includegraphics{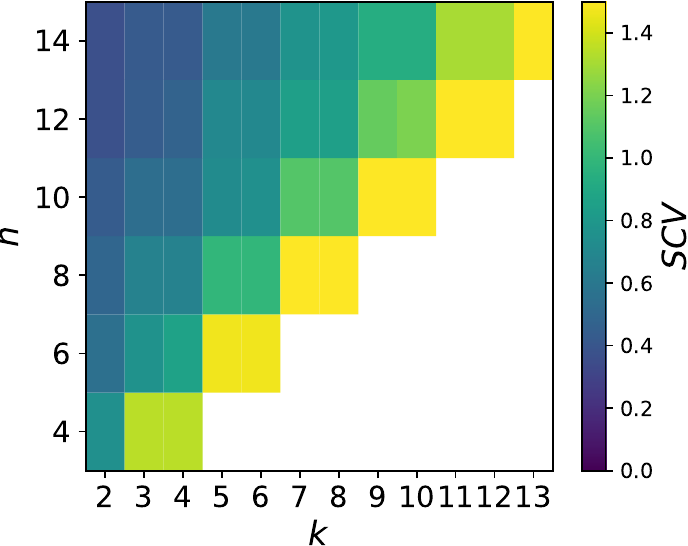}}
    	}
    	~
    	\subfloat[][HE, BC3]
    	{
    		\centering\resizebox{0.31\textwidth}{!}{\includegraphics{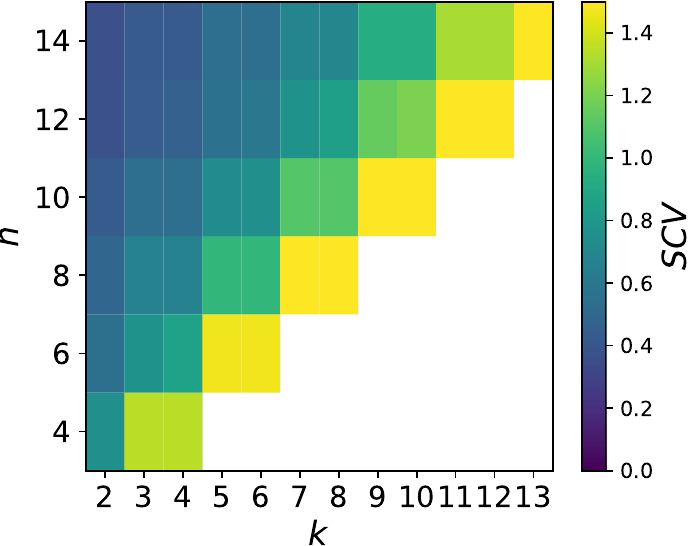}}
    	}
    	
    	\caption{Comparison between the SCV values of TTFs for the systems under $r=0.9$ with varying $k,n$ and for each inter-shock time distribution and balance condition}
    	\label{fig:sensitivity_analysis_SCV}
    \end{figure}

    A consistent trend is observed across all cases. For a fixed $n$, the SCV increases as $k$ increases, and for a fixed $k$, the SCV increases as $n$ decreases. Recalling that the MTTF showed the opposite trend---where it increase as $k$ decreases for fixed $n$, and as $n$ increases for fixed $k$---this result highlights a clear inverse relationship between the SCV and the MTTF. Furthermore, the extent of this effect (i.e., the magnitude of SCV's increment) becomes more pronounced in the increasing order of ER-EXP-HE, as observed by comparing the rows of Fig.~\ref{fig:sensitivity_analysis_SCV}. In other words, this inverse relationship is more evident when the MTTF value is higher.

    \section{Conclusion} \label{sec:conclusion}
    This study focused on the lifetime analysis of circular $k$-out-of-$n$: G balanced systems operating in a shock environment. Unlike traditional $k$-out-of-$n$ systems, these systems require not only a minimum number of operating units but also a predefined balance condition to maintain functionality. By leveraging a two-step finite Markov chain imbedding approach, we derived closed-form expressions for the system’s lifetime distributions in both discrete- and continuous-time settings. The shock-induced failure process was examined under different balance conditions, including symmetry-based (BC1), proportionality-based (BC2), and center of gravity-based (BC3) criteria. One of the key contributions of this work is the computational efficiency improvement in handling large state spaces through balance condition-based state space consolidation and direct calculation of multi-step transition probabilities. This allows for more scalable reliability evaluations of complex balanced systems. Through extensive numerical analyses, we demonstrated that different balance conditions influence the system’s lifetime, with stricter balance constraints leading to higher system vulnerability. Moreover, our results highlight the interplay between key system parameters, such as the number of units ($n$), required minimum operating units ($k$), unit reliability ($r$), and inter-shock time distribution, in determining the overall reliability of the system. The findings from this research can be particularly beneficial in the design and maintenance planning of balance- and safety-critical engineering systems, such as UAVs, UAMs, and aerospace propulsion systems, where geometric balance conditions play a crucial role in operational reliability.
    
    There are several potential extensions to this research that can further enhance the reliability research study of circular $k$-out-of-$n$: G balanced systems. One promising direction is the incorporation of internal degradation mechanisms, where individual units deteriorate over time, independent of external shocks. This would allow for a more comprehensive understanding of system failure progression under real-world operating conditions. Additionally, one could introduce more advanced shock models that account for directional or heterogeneous shock impacts, which better reflects real-world operational environments where different components may experience varying levels of stress. Another area of future study involves cost and reward modeling, where optimal maintenance policies can be developed by integrating cost structures and reward functions to improve system longevity while minimizing operational expenses. Finally, further investigation on computational efficiency are essential for scaling the proposed approach to larger and more complex systems. Investigating numerical methods such as advanced state-space reduction strategies could significantly enhance the tractability of the reliability evaluation process.

% \section*{CRediT authorship contribution statement}
% \textbf{Seung Min Baik:} Conceptualization, Formal analysis, Investigation, Methodology, Software, Validation, Writing-original draft, Writing-review and editing. \textbf{Yongkyu Cho:} Conceptualization, Data curation, Formal analysis, Investigation, Methodology, Software, Supervision, Validation, Visualization, Writing-original draft, Writing-review and editing.

% \section*{Declaration of competing interest}
% The authors declare that they have no known competing financial interests or personal relationships that could have appeared to influence the work reported in this paper.

\section*{Acknowledgment} \label{sec:ack}
This work was supported by Kyonggi University Research Grant 2024.

\bibliographystyle{elsarticle-num} 
\bibliography{manuscript}

\end{document}